\documentclass[runningheads,envcountsame,a4paper,final]{llncs}

\usepackage{amssymb,amsmath,mathrsfs}
\usepackage[small,basic]{complexity}
\usepackage{xspace}
\usepackage{mathtools}
\usepackage{thmtools}

\usepackage{thm-restate}
\usepackage{setspace}

\usepackage[dvipsnames,table]{xcolor}
\definecolor{ruby}{RGB}{116,0,1}
\definecolor{emerald}{RGB}{26,121,42}
\definecolor{topaz}{RGB}{236,185,57}
\definecolor{sapphire}{RGB}{14,26,164}
\usepackage{tikz}

\usepackage{bm}
\usepackage{enumerate}

\usepackage{hyperref}
\hypersetup{
	final,
	colorlinks,
	 linkcolor={sapphire},
	 citecolor={emerald},
	 urlcolor={ruby}
}

\usepackage[capitalise,nameinlink]{cleveref}
\newcommand{\appref}[1]{\hyperref[#1]{Appendix~\ref*{#1}}}

\usepackage{listings} 
\lstdefinelanguage{Sage}[]{Python}
{morekeywords={True,False,sage,singular},
sensitive=true}
\definecolor{dblackcolor}{rgb}{0.0,0.0,0.0}
\definecolor{dbluecolor}{rgb}{.01,.02,0.7}
\definecolor{dredcolor}{rgb}{0.8,0,0}
\definecolor{dgraycolor}{rgb}{0.30,0.3,0.30}

\spnewtheorem{nclam}{Claim}{\bfseries}{\itshape}
\crefname{nclam}{Claim}{Claims}

\spnewtheorem{remark}{Remark}{\bfseries}{\rmfamily} 

\newcommand{\seq}[3][{}]{\langle #2 \rangle_{#3}^{#1}}

\newclass{\EXPTIME}{EXPTIME}
\newclass{\PTIME}{PTIME}

\renewcommand{\C}{\mathbb{C}}
\newcommand{\Q}{\mathbb{Q}}

\newcommand{\Z}{\mathbb{Z}}
\newcommand{\N}{\mathbb{N}}

\newcommand{\Oc}{\mathcal{O}}

\renewcommand{\K}{\mathbb{K}}

\DeclareMathOperator{\GL}{GL}

\DeclareMathOperator{\PSL}{PSL}
\let\SL\relax \DeclareMathOperator{\SL}{SL}

\DeclareMathOperator{\Id}{Id}

\DeclareMathOperator{\ut}{U}
\DeclareMathOperator{\fg}{F}
\DeclareMathOperator{\sg}{S}

\usepackage[plain]{algorithm}
\usepackage{algorithmicx,algpseudocode}

\newcommand*\wc{{ }\cdot{ }}

\newcolumntype{L}[1]{>{\centering\arraybackslash}m{#1}}

\providecommand*{\eu}%
{\ensuremath{\mathrm{e}}}
\providecommand*{\iu}%
{\ensuremath{\mathrm{i}}}

\DeclareMathSymbol{\shortminus}{\mathbin}{AMSa}{"39}
\newcommand{\medminus}{\scalebox{0.6}[0.7]{\(-\)}}
\newcommand{\minus}{\mathchoice{-}{-}{\medminus}{\shortminus}}

 \usepackage{orcidlink}
 \renewcommand{\orcidID}[1]{\orcidlink{#1}}

\usepackage[T1]{fontenc}
\usepackage[utf8]{inputenc} \DeclareUnicodeCharacter{2212}{-}

\begin{document}
\title{On Word Representations and Embeddings in Complex Matrices
}

\author{Paul C. Bell\inst{1}\orcidID{0000-0003-2620-635X} \and
George Kenison\inst{2}\orcidID{0000-0002-7661-7061} \and
Reino Niskanen\inst{1}\orcidID{0000-0002-2210-1481} \and \\
Igor Potapov\inst{3}\orcidID{0000-0002-7192-7853} \and
Pavel Semukhin\inst{1}\orcidID{0000-0002-7547-6391}}

\titlerunning{On Word Representations and Embeddings in Complex Matrices}
\authorrunning{P.C.~Bell et al.}

\institute{Liverpool John Moores University, UK\\
\email{\{p.c.bell,r.niskanen,p.semukhin\}@ljmu.ac.uk}
\and
KU Leuven, Belgium
\email{george.kenison@kuleuven.be}\\
\and
University of Liverpool, UK 
\email{potapov@liverpool.ac.uk}
}
\maketitle

\begin{abstract}
Embeddings of word structures into matrix semigroups provide a natural bridge between combinatorics on words and linear algebra. However, low-dimensional matrix semigroups impose strong structural restrictions on possible embeddings. 
Certain finitely generated groups admit faithful representations in \(\SL(2,\C)\) and other similar matrix groups.
On the other hand, it is known that the product of two free semigroups on two generators cannot be embedded into the $2\times2$ complex matrices.
In this paper  we study embeddings of word structures into low-dimensional matrix semigroups over the complex numbers and develop new techniques for constructing word representations of the Euclidean Bianchi groups. These representations provide a symbolic framework and a natural first step towards analysing fundamental decision problems in $2\times2$ matrix semigroups.
\end{abstract}

\section{Introduction}
Matrix products are one of the most fundamental operations in mathematics. Originally introduced as a compact way to represent and solve systems of linear equations, matrix products later came to play an essential role in many fields, with applications ranging from engineering and control theory to modern data science and large-scale information systems such as search ranking algorithms.

The study of embeddings of word structures into matrices connects combinatorics on words with linear algebra. By representing words or generators as matrices, questions about concatenation of words and computational models operating on words can be translated into questions about matrix multiplication. Thus, solutions to problems in combinatorics on words and formal language theory can employ algebraic and analytic tools from matrix theory. Moreover, embeddings into integer or complex matrices provide a concrete and well-understood algebraic framework in which structural properties of (semi)groups can be analysed, while also enabling the transfer of results between the theory of words and the theory of matrix semigroups.

Observe that it is not always possible to embed word structures within certain matrix classes or dimensions. This observation clarifies the expressive power of small matrix semigroups, showing that some algebraic structures generated by words cannot be faithfully represented in restricted matrix domains. 
Further, it helps determine which techniques from linear algebra can be applied to word problems, and which problems require different approaches.

The origins of the connection between combinatorial group theory and linear representations go back to the work of 
Nielsen in the 1920s on free groups \cite{wilhelm1966combinatorial}. These techniques provided an early structural understanding of free groups and influenced approaches to representing their elements by algebraic objects. Later, %
Magnus showed that free groups admit faithful representations into groups of matrices over certain rings of formal power series. Interest in such embeddings grew further with the development of algorithmic problems in algebra and theoretical computer science, when researchers began investigating whether problems concerning words and formal languages could be translated into algebraic problems involving matrices and vice versa \cite{CHK99,CN12,dong2024metabelian,KNP18,Paterson70,potapov2019vector}.

Recent work has also highlighted the importance of decision problems for matrix semigroups over complex numbers. In particular, the first decidability results for the identity and group problems in the complex Heisenberg group 
$\textrm{H}(3,\C)$ demonstrate that matrix semigroups with non-trivial structural constraints require new number theoretical techniques  \cite{BellNPS23} and new group theoretic techniques for algorithmic analysis \cite{Dong24}.
At the same time, many fundamental questions for low-dimensional matrices over the complex numbers remain open. A central example concerns embeddings of word semigroups into the ${2\times 2}$ complex matrices. 
It is known that the product of two free semigroups on two generators cannot be embedded into $\C^{2\times 2}$~\cite{CHK99}.
On the other hand, 
deciding whether a finitely generated torsion-free group embeds in \(\SL(2,\C)\) (see~\cite{Button02012016}) has deep connections to  questions in geometric group theory.
These results indicate that the expressive power of 
${2\times 2}$ complex matrices lies at a delicate boundary---while they allow rich algebraic representations, they also impose strong structural limitations on possible embeddings.

\begin{table}[htb]
\begin{center}
\begin{tabular}{c|c|L{4.1cm}|L{3cm}|L{2.5cm}}
 \(\nexists\) & \(\sg(\Sigma_1)\) & \(\sg(\Sigma_2)\) & \(\fg(\Sigma_1)\) & \(\fg(\Sigma_2)\) \\
\hline
\(\{\varepsilon\}\) & ? & ? & ? & \(\ut(n,\C)\), \(n\geq1\) (Prop.~\ref{thm:NoIntoUCn})\\
\hline
\(\sg(\Sigma_1)\) & ? & ? & ? & \textcolor{red}{\(\SL(2,\Oc_d)\)} (Prop.~\ref{prop:NonExistingEmbeddings}) \\
\hline
\(\sg(\Sigma_2)\) & {\cellcolor[gray]{0.7}} & \(\C^{2\times2}\)\cite{CHK99}, \(\SL(3,\Z)\)\cite{KNP18}  & \textcolor{red}{\(\Oc_d^{2\times2}\)} (Prop.~\ref{prop:NonExistingEmbeddings}) & \textcolor{red}{\(\Oc_d^{3\times3}\)} (Prop.~\ref{prop:NonExistingEmbeddings}) \\
\hline
\(\fg(\Sigma_1)\) & {\cellcolor[gray]{0.7}} & {\cellcolor[gray]{0.7}} & ? & \textcolor{red}{\(\Oc_d^{2\times2}\)} (Prop.~\ref{prop:NonExistingEmbeddings}) \\
\hline
\(\fg(\Sigma_2)\) & {\cellcolor[gray]{0.7}} & {\cellcolor[gray]{0.7}} & {\cellcolor[gray]{0.7}} & \(\Z^{3\times3}\)\cite{KNP18} \\
\end{tabular}
\end{center}
\caption{\label{table:notexists}
The state of the art for the non-existence of embeddings from pairs of words into different matrix semigroups. Entries in  red 
are new to the present paper.}
\end{table}

This paper explores the boundary between combinatorics on words and matrix semigroups, aiming to characterise which word structures admit faithful low-dimensional matrix  representations over complex numbers and develops new techniques to find word representations for the Euclidean Bianchi groups (\cref{thm:wordPicard}), an essential first step to study fundamental decision problems in matrix semigroups within their symbolic representations, like  membership, freeness, and vector reachability in $2\times 2$ matrix semigroups.
\cref{table:notexists} summarises non-existence results for embeddings in different settings, where 
 \(\sg(\Sigma_1)\) and \(\sg(\Sigma_2)\) denote free semigroups over unary and binary alphabets, while \(\fg(\Sigma_1)\) and \(\fg(\Sigma_2)\) denote free groups over unary and binary group alphabets.
The entries are read as ``column'' \(\times\) ``row''.
The grey background indicates symmetrical cases that do not require consideration, e.g., \(\sg(\Sigma_2)\times\sg(\Sigma_1)=\sg(\Sigma_1)\times\sg(\Sigma_2)\).
\cref{table:exists} details existing embeddings.%

\begin{table}[thb]
\begin{center}
\begin{tabular}{c|L{2cm}|L{2.9cm}|L{3cm}|L{2.9cm}}
 \(\exists\) & \(\sg(\Sigma_1)\) & \(\sg(\Sigma_2)\) & \(\fg(\Sigma_1)\) & \(\fg(\Sigma_2)\) \\
\hline
\(\{\varepsilon\}\) & & \(\ut(2, \N)\) (folklore) & \(U(2,\Z)\) (folklore), \(\Q\) (folklore) & \(\C^{2\times2}\) (folklore), \(\Z^{2\times2}\) (folklore)\\
\hline
\(\sg(\Sigma_1)\) & \textcolor{blue}{\(\ut(2,\N)\)} (Prop.~\ref{prop:ExistingEmbeddings}) & \textcolor{blue}{\(\N^{2\times2}\)} (Prop.~\ref{prop:ExistingEmbeddings}) & ? & \(\Z^{2\times2}\) (Prop.~\ref{prop:ExistingEmbeddings}) \\
\hline
\(\sg(\Sigma_2)\) & {\cellcolor[gray]{0.7}} & \({\rm SL}(3,\Q)\)~\cite{KNP18}, U\((3,\N)\)~\cite{Paterson70} & \textcolor{blue}{\(\Q^{2\times2}\)}, \(\Z^{3\times3}\) (Prop.~\ref{prop:ExistingEmbeddings}) & ? \\
\hline
\(\fg(\Sigma_1)\) & {\cellcolor[gray]{0.7}} & {\cellcolor[gray]{0.7}} & \(\Q\) (folklore) & \(\Q^{2\times2}\) (Prop.~\ref{prop:ExistingEmbeddings}) \\
\hline
\(\fg(\Sigma_2)\) & {\cellcolor[gray]{0.7}} & {\cellcolor[gray]{0.7}} & {\cellcolor[gray]{0.7}} & \(\Z^{4\times4}\), \(\mathbb{H}(\Q)^{2\times 2}\) \cite{BP08} \\
\end{tabular}
\end{center}
\caption{\label{table:exists}The state of the art for the existence of embeddings from pairs of words into different matrix semigroups. 
Entries in {blue} are straightforward extensions of results in the literature.
}
\end{table}

The paper is structured as follows.
In the next section, we introduce necessary notations and preliminaries.
In \cref{sec:wordrep}, we prove that an element of a Euclidean Bianchi group admits a canonical word representation.
In \cref{sec:embedpairs}, we investigate the border between the existence and non-existence of injective morphisms from pairs of word (semi)groups to various matrix semigroups.

\section{Preliminaries} \label{sec:preliminaries}
\subsection{Words, semigroups, and groups}
Given an \emph{alphabet} $\Sigma = \{a_1,a_2, \ldots, a_m\}$, a finite \emph{word} $u$ is a finite sequence of letters, $u=u_1u_2\cdots u_n$, where $u_i\in\Sigma$.
We denote the free semigroup of finite words over \(\Sigma\) by \(\sg(\Sigma)\) where \(\sg(\Sigma) = \langle a_1, \ldots, a_m \rangle\), with \(\langle a_1, \ldots, a_m \rangle\) denoting the semigroup \emph{generated} by \(a_1, \ldots a_m\).
The \emph{empty word} is denoted by $\varepsilon$.
The length of a finite word~$u$ is denoted by $|u|$ and $|\varepsilon|=0$.
We denote by \(\fg(\Sigma)\) the free group \(\langle a_1, a_1^{-1}, \ldots, a_m, a_m^{-1}\rangle\), i.e., the free group generated by the letters \(a_i\) and their inverses \(a_i^{-1}\).
Naturally, \(a_ia_i^{-1}=a_i^{-1}a_i=\varepsilon\).
The elements of \(\fg(\Sigma)\) are all \emph{reduced} words over $\Sigma$, i.e., words not containing $a_ia_i^{-1}$ or $a_i^{-1}a_i$ as a subword.
In this context, we call \(\Sigma\) a finite \emph{group alphabet}, i.e., an alphabet with an involution.
The multiplication of two elements (reduced words) \(u,v\in \fg(\Sigma)\) corresponds to the unique reduced word of the concatenation \(uv\).
The length of an element of \(\fg(\Sigma)\) is given by the length of its reduced word representation.

Let \(\varphi\) be a mapping from \(\sg(\Sigma)\) into \(\K^{n\times n}\), where \(\K\) is an algebraic structure (e.g., \(\N,\Z,\Q,\C\)) and \(n\geq1\).
That is, a mapping of semigroup words into \(n\)-by-\(n\) matrices with elements from \(\K\).
We say that \(\varphi\) is an \emph{injective morphism} or an \emph{embedding} if \(\varphi(u)\varphi(v)=\varphi(uv)\) for every \(u,v\in\sg(\Sigma)\) and if \(\varphi(u)=\varphi(v)\) implies that \(u=v\).
We define \emph{group embeddings} over the free group \(\fg(\Sigma)\) analogously.

The following proposition is folklore.
It allows us to focus on small alphabets, namely unary or binary alphabets, in the sequel.
\begin{proposition}
Let \(\Sigma_k = \{a_1, a_2, \ldots, a_k\}\) be an alphabet for any \(k \geq 2\). 
    \begin{itemize}
        \item If there exists an embedding \(\sigma:\sg(\Sigma_2)\to \mathbb{F}^{n \times n}\), then there exists an embedding \(\sigma:\sg(\Sigma_k)\to \mathbb{F}^{n \times n}\) for any \(k \geq 2\).
        \item If there exists a group embedding \(\sigma:\fg(\Sigma_2)\to \mathbb{F}^{n \times n}\), then there exists a group embedding \(\sigma:\fg(\Sigma_k)\to \mathbb{F}^{n \times n}\) for any \(k \geq 2\).
    \end{itemize} 
\end{proposition}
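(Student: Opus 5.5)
The plan is to compose the given embedding with an injective morphism from the free object on \(k\) generators into the free object on two generators. An injective homomorphism composed with an injective homomorphism is again an injective homomorphism, so it suffices to exhibit, for each \(k\geq 2\), an injective morphism \(\psi_k:\sg(\Sigma_k)\to\sg(\Sigma_2)\) and an injective group homomorphism \(\psi_k:\fg(\Sigma_k)\to\fg(\Sigma_2)\). The desired embedding is then \(\sigma\circ\psi_k\).

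For the semigroup case, write \(\Sigma_2=\{a,b\}\) and take \(\psi_k(a_i)=a^ib\) for \(1\leq i\leq k\). The set \(\{a^ib : 1\leq i\leq k\}\) is a prefix code, since none of these words is a prefix of another. Hence every word in the image factors uniquely by reading off the maximal blocks of \(a\)'s, each terminated by a \(b\). This gives injectivity of the induced morphism on \(\sg(\Sigma_k)\). The morphism property is automatic, since \(\psi_k\) is defined on letters and extended by concatenation.

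For the group case, take \(\psi_k(a_i)=a^iba^{-i}\), or alternatively \(\psi_k(a_i)=b^{-i}ab^{i}\). Let \(w=x_1^{\epsilon_1}\cdots x_m^{\epsilon_m}\) be a nonempty reduced word in \(\fg(\Sigma_k)\), with \(\epsilon_j\in\{\pm1\}\). Its image is a product of blocks \(a^{i}b^{\epsilon}a^{-i}\). Between two consecutive blocks \(a^{i}b^{\epsilon}a^{-i}\) and \(a^{j}b^{\delta}a^{-j}\), the \(a\)-powers combine to \(a^{j-i}\). If \(i\neq j\), this leaves a nontrivial \(a\)-power between the two \(b\)-letters. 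If \(i=j\), reducedness of \(w\) forces \(\epsilon=\delta\), so we get \(b^{\epsilon}b^{\epsilon}\) with no cancellation. In either case no \(b^{\pm1}\) letter ever cancels. The reduced image therefore contains exactly \(m\geq1\) occurrences of \(b^{\pm1}\) and is nontrivial. So the kernel of \(\psi_k\) is trivial, and \(\psi_k\) is injective. The homomorphism property again holds by construction, since any assignment of images to free generators extends to a homomorphism.

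The main, if mild, obstacle is this cancellation argument in the group case. A cleaner route would be to cite the Nielsen–Schreier theorem: \(\fg(\Sigma_2)\) contains free subgroups of every finite rank, for example the commutator subgroup, or any finite-index subgroup of rank \(\geq k\), since the index-\(j\) subgroups have rank \(j+1\). Any free subgroup of rank at least \(k\) contains one of rank exactly \(k\). The explicit conjugate basis above avoids this citation, so I would present the direct check, and the composition \(\sigma\circ\psi_k\) then gives both claims of the proposition.
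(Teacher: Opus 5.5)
The paper gives no proof of this proposition (it is cited as folklore), so there is nothing to compare against. Your argument is correct and is the standard one: precompose \(\sigma\) with the injective morphism \(\sg(\Sigma_k)\to\sg(\Sigma_2)\) given by the prefix code \(a_i\mapsto a^ib\), and with the injective homomorphism \(\fg(\Sigma_k)\to\fg(\Sigma_2)\) given by \(a_i\mapsto a^iba^{-i}\).

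Your cancellation check in the group case is also right. The image of a nonempty reduced word freely reduces to \(a^{i_1}b^{\epsilon_1}a^{i_2-i_1}b^{\epsilon_2}\cdots b^{\epsilon_m}a^{-i_m}\). Whenever an exponent \(i_{j+1}-i_j\) vanishes, reducedness forces \(\epsilon_j=\epsilon_{j+1}\), so all \(m\) letters \(b^{\pm1}\) survive and the image is nontrivial.
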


\subsection{Integer rings, Euclidean domains, and quadratic fields}
We shall assume some familiarity with concepts from algebraic number theory (cf.~\cite{stewart2016algebraic}).
We refer the reader to the background material on properties of certain imaginary quadratic fields in \cref{app:QuadraticFields}.

Let \(\overline{\Q}\) denote the field of algebraic numbers.
Recall that a number is \emph{algebraic} if it is the root of a non-zero polynomial \(p\in\Z[x]\) with integer coefficients.
Further, a number is an \emph{algebraic integer} if it is the root of a non-zero monic polynomial \(p\in\Z[x]\).
We can effectively represent and compute algebraic numbers~\cite{cohen2013course}.

An \emph{integral domain} is a non-zero commutative ring such that the product of any two non-zero elements is itself non-zero.
An integral domain \(E\) is Euclidean (or a \emph{Euclidean domain}) if there exists a function \(g \colon E \setminus\{0\}\to \Z_{\ge 0}\) such that for every \(x,y\in E\setminus\{0\}\),
\begin{itemize}
    \item \(g(xy) \ge g(x)\);
    \item if \(x,y\in E\) and \(y\neq 0\).  The there exist \(q,r\in E\) with \(a=qb+r\) and either \(r=0\) or \(g(r)<g(b)\).
\end{itemize}
A function \(g\) with such properties is a \emph{Euclidean function}~\cite{dubois1958divisionalgorithms,hardy2008introduction}.

The group of \emph{units} \(R^\times\) of a ring \(R\) is the subset  of elements that possess a multiplicative inverse element in the ring.
 Two elements \(a,b\in R\) of an integral domain are \emph{associates} if there exists a unit \(u\in R\) such that \(a = bu\).

Recall that a number field \(\mathbb{K}\) is \emph{quadratic} (or a \emph{quadratic field}) if there is a square-free integer \(-d\in\Z\) such that \(\mathbb{K}=\mathbb{Q}(\sqrt{-d})\).  If, in addition, \(d\in\N\), then the field \(\mathbb{Q}(\sqrt{-d})\) is an \emph{imaginary quadratic field}.  In the work that follows, we denote by \(\Oc_d\) the ring of algebraic integers in \(\mathbb{Q}(\sqrt{-d})\).
For \(d\in\Z\) square-free, the quadratic field \(\Q(\sqrt{-d})\) admits a  \emph{field norm} \(N\colon \Q(\sqrt{d}) \to \Q\)~\cite{stewart2016algebraic}.
This norm is multiplicative, so that \(N(\alpha \beta) = N(\alpha)N(\beta)\), for algebraic integers \(\alpha\) we have \(N(\alpha)\in\Z\), and for imaginary quadratic fields we have \(N(\alpha) = |\alpha|^2\).
For the ring of algebraic integers in a number field, the units are precisely those integers \(u\) for which \(N(u)=1\).

The next result precisely characterises imaginary quadratic fields \(\Q(\sqrt{-d})\) whose integer rings \(\Oc_d\) are Euclidean domains.
\begin{lemma}[{\cite[Proposition 8.9]{hatcher2022numbers}}]
\label{lem:euclideandomain}
The ring of integers \(\Oc_d\) for the imaginary quadratic field \(\Q(\sqrt{-d})\) is a Euclidean domain if and only if \(d\in\{1,2,3,7,11\}\). For each such \(d\), the field norm on \(\Q(\sqrt{-d})\) is Euclidean. 
\end{lemma}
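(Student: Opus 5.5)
The plan is to prove the two directions separately. For $d\in\{1,2,3,7,11\}$ we show directly that the field norm $N$ is a Euclidean function. For every other square-free $d>0$ we show, via Motzkin's universal side divisor argument, that $\Oc_d$ admits no Euclidean function at all.

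\emph{Sufficiency.} First note that $N(\alpha)\in\Z_{\ge 1}$ for $\alpha\ne 0$, and that $N$ is multiplicative. Hence $N(xy)=N(x)N(y)\ge N(x)$, which is the first axiom. For the division property, take $x,y\in\Oc_d$ with $y\neq 0$. It suffices to find $q\in\Oc_d$ with $|x/y-q|^2<1$, because then $r=x-qy$ satisfies $N(r)=N(y)\,|x/y-q|^2<N(y)$. So the task is to bound the covering radius of the lattice $\Oc_d\subset\C$.
\begin{itemize}
\item If $d\equiv 1,2 \pmod 4$, then $\Oc_d=\Z+\Z\sqrt{-d}$ is a rectangular lattice. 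Rounding real and imaginary parts gives $|x/y-q|^2\le \tfrac14+\tfrac d4$, which is $<1$ exactly when $d<3$. This covers $d=1,2$.
\item If $d\equiv 3\pmod 4$, then $\Oc_d=\Z+\Z\omega$ with $\omega=(1+\sqrt{-d})/2$. First choose the imaginary part so that it lies within $\sqrt d/4$ of the correct row. Then round within that row, which contributes at most $1/2$ in the real direction. A slightly finer computation (the circumradius of the triangles in the fundamental domain) gives the bound $(1+d)^2/(16d)$. This is $<1$ iff $d^2-14d+1<0$, i.e.\ $d\le 13$, which covers $d=3,7,11$.
\end{itemize}
I expect only routine checking here.

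\emph{Necessity.} This is the main obstacle, because we must exclude every possible Euclidean function $g$, not only the norm. The tool is Motzkin's lemma, which I would prove inline.

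Suppose $E=\Oc_d$ is Euclidean with function $g$. Since $E$ is not a field, the set of nonzero non-units is nonempty. Choose $u$ in this set with $g(u)$ minimal. For any $x\in E$, divide: $x=qu+r$ with $r=0$ or $g(r)<g(u)$. By minimality of $g(u)$, $r$ is then $0$ or a unit. Hence every residue class of $E/(u)$ contains $0$ or a unit.

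Now let $d\notin\{1,3\}$. Then $E^\times=\{\pm1\}$, so $|E/(u)|\le 3$. Since $|E/(u)|=N(u)$ and $u$ is a non-unit, this gives $N(u)\in\{2,3\}$.

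It remains to check that no element has norm $2$ or $3$ when $d\notin\{1,2,3,7,11\}$.
\begin{itemize}
\item If $d\equiv1,2\pmod 4$, then $d\ge 5$ and $N(a+b\sqrt{-d})=a^2+db^2$. This value is a perfect square when $b=0$ and is $\ge 5$ when $b\ne 0$, so it never equals $2$ or $3$.
\item If $d\equiv 3\pmod 4$, then $d\ge 15$ and $N=(a^2+db^2)/4$ with $a\equiv b\pmod 2$. If $b\ne0$, then $N\ge d/4>3$. If $b=0$, then $a$ is even and $N=(a/2)^2$ is a perfect square.
\end{itemize}
Either way we reach a contradiction, so no Euclidean function exists for these $d$.
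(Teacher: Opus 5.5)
Your proof is correct. It differs from the paper only in that the paper gives no argument at all: it imports \cref{lem:euclideandomain} from \cite[Proposition 8.9]{hatcher2022numbers}. Your self-contained proof buys two things over the bare citation.

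First, your necessity direction uses Motzkin's minimal-non-unit argument, which excludes \emph{every} Euclidean function and not only the norm. That is what the phrase ``is a Euclidean domain if and only if'' actually requires. It is also the substantive part. For $d=19,43,67,163$ the ring $\Oc_d$ is a principal ideal domain, so neither failure of unique factorisation nor failure of the norm alone could settle those cases.

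Second, your sufficiency computation reproduces exactly the constants of \cref{prop:euclideanminima}. Rounding in the rectangular lattice gives $(d+1)/4$ for $d=1,2$. The circumradius of the triangle $0,1,\omega$ gives $(d+1)^2/(16d)$ for $d=3,7,11$. The corresponding deep holes lie in $\Q(\sqrt{-d})$, so these covering radii are the Euclidean minima $\kappa(d)$ that the paper later uses in the runtime analysis of \cref{thm:wordPicard}.

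The only inputs you use without proof are standard:
\begin{itemize}
\item $|\Oc_d/(u)|=N(u)$ follows by taking the determinant of multiplication by $u$ on the $\Z$-basis $\{1,\omega\}$.
\item $\Oc_d^\times=\{\pm1\}$ for $d\neq 1,3$ follows by solving $N(x)=1$.
\end{itemize}
Also, your cruder bound $\tfrac14+\tfrac{d}{16}$ is already $<1$ for $d\le 11$. The finer circumradius computation is needed only if you want the sharp constant.
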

\subsection{Matrix groups}
We shall assume some familiarity with (semi)groups of square matrices such as \(\GL(2,R) = \{M\in R^{2\times2} : \det(M)\in R^\times\}\), the \emph{General Linear group} of \(2\times 2\) matrices with entries in the ring \(R\) whose inverses also have entries in the ring \(R\),
the \emph{Special Linear group} \(\SL(2,R) = \{M\in \GL(2,R) : \det(M)=1\}\), and the quotient group \(\SL(2,R) /\{\pm \Id_2\} \eqqcolon \PSL(2,R)\) called the \emph{Projective Special Linear group}.

By definition, we associate to each element \(m\in\PSL(2,R)\) a set \(\{M, -M\}\) of two matrices
in \(\SL(2,R)\). Henceforth we employ a slight abuse of notation and write \(m = \pm M\), or choose either matrix \(M\) or \(-M\) to represent \(m\).
Intuitively, one can take \(\PSL(2,R)\) as \(\SL(2,R)\) by ignoring the sign.

In the literature, the \emph{Bianchi groups} are the groups \(\PSL(2,\Oc_d)\) where \(\Oc_d\) is a ring of integers of an imaginary quadratic field \(\Q(\sqrt{-d})\). In the sequel, we are interested in the \emph{Euclidean Bianchi groups} \(\PSL(2,\Oc_d)\) where \(d=1,2,3,7,11\).
The naming convention follows from the result in \cref{lem:euclideandomain}: of the imaginary quadratic number rings, only \(\mathcal{O}_1, \mathcal{O}_2, \mathcal{O}_3, \mathcal{O}_7\), and \(\mathcal{O}_{11}\) are Euclidean domains.
Observe that \(\mathcal{O}_1\) is the Gaussian ring \(\Z[\iu]\). The group \(\PSL(2,\Z[\iu])\) is commonly called the Picard group~\cite{fine1987picard,fine1989bianchi}.

\section{Word Representation Procedure}
\label{sec:wordrep}

It is well-known that each of the Euclidean Bianchi groups is finitely generated~\cite[Theorem 4.3.1]{fine1989bianchi}. The groups \(\PSL(2,\Oc_1)\) and \(\PSL(2,\Oc_3)\) admit group presentations with four generators each, whilst \(\PSL(2,\Oc_2)\), \(\PSL(2,\Oc_7)\), and \(\PSL(2,\Oc_{11})\) are each generated by three elements.
The matrices associated to each of these generators (commonly denoted by \(a\), \(t\), \(u\), and \(\ell\)) are:
    \begin{equation*}
        A = \begin{pmatrix}
        0 & -1 \\ 1 & \phantom{-}0
    \end{pmatrix}, \quad
    T = \begin{pmatrix}
        1 & 1 \\ 0 & 1
    \end{pmatrix}, \quad
    U = \begin{pmatrix}
        1 & \zeta \\ 0 & 1
    \end{pmatrix}.
    \end{equation*}
The top-right entry \(\zeta\) of \(U\) depends on the specific matrix group \(\PSL(2,\Oc_d)\); we give further details in \cref{app:QuadraticFields}.
The fourth matrix \(L\) is given as follows,
\begin{equation*}
    L = \begin{dcases} 
        \begin{pmatrix}
        \iu & \phantom{-}0 \\
        0 & -\iu
    \end{pmatrix} & \text{if } d=1, \text{ and } \\
    \begin{pmatrix}
        \omega^2 & 0 \\
        0 & \omega
    \end{pmatrix} & \text{if } d=3 \text{ where } \omega = - \tfrac{1}{2} + \tfrac{\sqrt{3}\iu}{2}. \\
    \end{dcases}
\end{equation*}

The following theorem generalises the procedure in \cite[Lemma 3.1]{bell2017identity} that generates word representations for elements of \(\PSL(2,\Z)\). 
\begin{restatable}{theorem}{wordBianchi}
\label{thm:wordPicard}
    There is a procedure that, given an element \(M\) in a Euclidean Bianchi group \(\PSL(2,\Oc_d)\), outputs a word representation for \(M\)
    of the form
        \begin{align*}
              (L^\epsilon T^{p_0} U^{q_0}) \cdot A T^{p_k} U^{q_k} \cdot A T^{p_{k-1}} U^{q_{k-1}} \cdots A T^{p_1} U^{q_1} & \text{ if } d\in\{1,3\}, \text{ and} \\
              (T^{p_0} U^{q_0}) \cdot A T^{p_k} U^{q_k} \cdot A T^{p_{k-1}} U^{q_{k-1}} \cdots A T^{p_1} U^{q_1} & \text{ if } d \in\{2,7,11\}.
        \end{align*}
    Here \(\epsilon\in\{0,1,2\}\), the exponent pairs \(p_\ell, q_\ell\in\Z\) each satisfy \(|p_\ell + q_\ell\omega|^2 \le \|M\|\) where \(\|M\|\coloneqq \max_{1\le i,j \le 2} |M_{ij}|^2\), {and} \(k < 1-\log_{\kappa(d)} \|M\|\) where \(\kappa(d)\) is the Euclidean minimum of \(\Oc_d\) (see \appref{app:QuadraticFields}).
    Moreover, this procedure runs in time polynomial in 
        \( -\log_{\kappa(d)} \|M\|\).
\end{restatable}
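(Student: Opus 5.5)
We run a Euclidean algorithm on the first column of \(M\), in the spirit of \cite[Lemma 3.1]{bell2017identity}, and read off the word from the sequence of reduction steps. Write \(M=\begin{pmatrix} a & b\\ c & d'\end{pmatrix}\) with entries in \(\Oc_d\). The basic observation is that \(T\) and \(U\) generate all upper unitriangular matrices over \(\Oc_d\). Since \(\Oc_d=\Z[\omega]\), with \(\omega\) the generator used for \(\zeta\) in \cref{app:QuadraticFields}, we have \(T^{p}U^{q}=\begin{pmatrix}1 & p+q\omega\\ 0 & 1\end{pmatrix}\) for all \(p,q\in\Z\). Left multiplication by \(T^{-p}U^{-q}\) replaces the top row by \((a,b)-(p+q\omega)(c,d')\). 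Left multiplication by \(A^{-1}=-A\), which equals \(A\) in \(\PSL\), swaps the rows up to sign.

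The procedure is as follows.

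\emph{Step 1.} If \(c=0\), then \(ad'=1\), so \(a\) is a unit. For \(d\in\{2,7,11\}\) the only units are \(\pm1\), so \(M=\pm T^{p_0}U^{q_0}\). For \(d\in\{1,3\}\) the unit group is \(\{\pm1,\pm\iu\}\) or the sixth roots of unity, so \(M=L^{\epsilon}T^{p_0}U^{q_0}\) in \(\PSL\) with \(\epsilon\in\{0,1,2\}\). Here one checks that \(\pm L^{\epsilon}\) exhausts \(\mathrm{diag}(u,u^{-1})\) modulo sign: \(L^2=-\Id\) when \(d=1\), and \(L\) has order \(3\) in \(\PSL\) when \(d=3\).

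\emph{Step 2.} If \(c\neq 0\), use \cref{lem:euclideandomain} (the norm is Euclidean) to choose \(\gamma=p+q\omega\) with \(N(a-\gamma c)\le \kappa(d) N(c)<N(c)\). Here \(\kappa(d)<1\) is the Euclidean minimum, and \(\gamma\) is obtained by rounding \(a/c\) to a nearest lattice point of \(\Z[\omega]\).

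\emph{Step 3.} Replace \(M\) by \(M'=A\,T^{-p}U^{-q}M\). Its first column is \((\mp c,\,a-\gamma c)\), so \(M=T^{p}U^{q}A^{-1}M'=T^{p}U^{q}AM'\) in \(\PSL\). Iterate on \(M'\).

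Each step multiplies the norm of the bottom-left entry by at most \(\kappa(d)\), and this entry is a nonzero algebraic integer whenever it is nonzero, so its norm is at least \(1\). The initial norm \(N(c)=|c|^2\) is at most \(\|M\|\). Hence after \(k\) steps \(\kappa(d)^{k}\|M\|\ge 1\) while the entry is still nonzero, which gives the bound \(k<1-\log_{\kappa(d)}\|M\|\) after the correct off-by-one accounting. Collecting the factors, the word has the form \(T^{p_k}U^{q_k}A\,T^{p_{k-1}}U^{q_{k-1}}A\cdots\) followed by the final upper-triangular piece, which is \(L^{\epsilon}T^{p_0}U^{q_0}\) or \(T^{p_0}U^{q_0}\). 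To match the stated normal form exactly, one either runs the algorithm on \(M^{-1}\) (or on the second column, acting on the right) and inverts, or uses the identity \(L T L^{-1}\in\langle T,U\rangle\), which lets the diagonal factor be pushed to the leftmost position. Inversion and conjugation preserve the shape \(A T^{p}U^{q}\) up to relabelling the exponents, because \(A^{-1}=A\) in \(\PSL\) and \((T^pU^q)^{-1}=T^{-p}U^{-q}\).

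\emph{Step 4 (exponent bounds).} For each quotient \(\gamma_\ell=p_\ell+q_\ell\omega\) we need \(|\gamma_\ell|^2\le\|M\|\). Since \(\gamma\) is a nearest lattice point to \(x/y\), where \(x,y\) are the current first-column entries, we get \(|\gamma|\le |x|/|y|+\sqrt{\kappa(d)}\). The current entries have modulus at most the original ones, since the norms decrease along the remainder sequence, and \(|y|\ge1\). The delicate cases are those where \(|x|/|y|\) is close to \(\sqrt{\|M\|}\). There I would first try a sharper estimate using the fact that the final quotient step lands exactly on a unit. Failing that, I would choose the rounding direction so that \(|\gamma|\le|x/y|\) when that is possible. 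I expect this bound, together with the exact off-by-one in \(k\), to be the main technical obstacle.

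\emph{Complexity.} There are \(O(-\log_{\kappa(d)}\|M\|)\) iterations. Each one involves exact arithmetic in \(\Q(\sqrt{-d})\), including rounding a quotient to the lattice \(\Z[\omega]\), on numbers of polynomially bounded bit size. So the procedure runs in polynomial time.
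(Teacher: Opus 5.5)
Your construction is correct and is the paper's argument mirrored. The paper divides on the bottom row by right multiplication with $H_\ell=U^{-q_\ell}T^{-p_\ell}A$, so $M=L^\epsilon T^{p_0}U^{q_0}H_k^{-1}\cdots H_1^{-1}$ comes out directly in the stated shape. You divide on the first column from the left and regroup, which works via $AL=L^{-1}A$ and $L^{\mp1}T^pU^qL^{\pm1}=T^{p'}U^{q'}$ with $|p'+q'\omega|=|p+q\omega|$. The off-by-one is not a real obstacle: your count gives $k\le 1-\log_{\kappa(d)}\|M\|$, and the strict form is false anyway for $M=A$ (where $\|M\|=1$ and $k=1$).

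The genuine gap is Step~4, and it is wider than you indicate. Every block must satisfy $|p_\ell+q_\ell\omega|^2\le\|M\|$, including the block coming from your terminal matrix $M_k$. Write $M_\ell=\begin{psmallmatrix}x_\ell&z_\ell\\y_\ell&w_\ell\end{psmallmatrix}$ and $\theta_\ell$ for the $\ell$-th quotient. Your step gives $M_\ell=\begin{psmallmatrix}-y_{\ell-1}&-w_{\ell-1}\\y_\ell&z_{\ell-1}-\theta_\ell w_{\ell-1}\end{psmallmatrix}$, so the top-right entry of $M_k$ is $-w_{k-1}$. That entry lies in the second column, which is not a remainder sequence, so nothing in your argument bounds it. The same omission leaves the bit sizes in your complexity paragraph unjustified.

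The missing ingredient is the paper's Claim that $\|M_\ell\|\le\|M_{\ell-1}\|$; it bounds the final factor by $\|M_k\|\le\|M\|$. Every entry of $M_\ell$ other than $w_\ell$ (the paper's $\alpha_\ell$) is an old entry or a remainder. The determinant identity controls the new one: in your orientation it reads $y_{\ell-1}w_\ell=y_\ell w_{\ell-1}-1$, and in the paper's $\gamma_{\ell-1}\alpha_\ell=\gamma_\ell\alpha_{\ell-1}-1$. It gives $|w_\ell|\le\sqrt{\kappa(d)}\,|w_{\ell-1}|+1$. Hence the norm can only increase from a matrix with $\|M_{\ell-1}\|$ below a constant depending on $\kappa(d)$, and the paper checks these finitely many matrices exhaustively (by hand for $d=1$, in SageMath otherwise). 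The same inequality gives, by induction, $|w_\ell|\le\max\{|w_0|,(1-\sqrt{\kappa(d)})^{-1}\}$. That already settles the bit sizes; the finite check is what sharpens the bound to exactly $\|M\|$.

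For the quotients, the paper proves $N(\theta_\ell)\le\|M_{\ell-1}\|+\kappa(d)$ and rounds down by integrality. Be careful if you copy this. That estimate, like $N(\alpha_\ell)\le\kappa(d)N(\alpha_{\ell-1})+1$, treats $N(u\pm v)\le N(u)+N(v)$ as valid and so drops a cross term. The honest versions are these:
\begin{itemize}
\item your $|\theta|\le|x|/|y|+\sqrt{\kappa(d)}$, whose excess $2\sqrt{\kappa(d)\|M\|}+\kappa(d)$ over $\|M\|$ integrality cannot absorb;
\item $|\alpha_\ell|\le\sqrt{\kappa(d)}\,|\alpha_{\ell-1}|+1$, which puts the search threshold at $(1-\sqrt{\kappa(d)})^{-2}$ rather than $1/(1-\kappa(d))$.
\end{itemize}
A clean way to finish the quotient bound is to split on $y$. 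If $y$ is a unit, the division is exact and $|\theta|=|x|\le\sqrt{\|M\|}$, since first-column entries never exceed the original ones. Otherwise $N(y)\ge 2$, so $|\theta|\le\sqrt{\|M\|/2}+\sqrt{\kappa(d)}\le\sqrt{\|M\|}$ once $\|M\|\ge(2+\sqrt2)^2\kappa(d)$. That leaves finitely many small cases for the same kind of exhaustive check. If you change the rounding rule (e.g.\ towards the origin), it must still give $N(r)\le\kappa(d)N(y)$, or you lose the bound on $k$.
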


\begin{proof}
    For the ease of presentation, we give the procedure for elements of the Picard group \(\PSL(2,\Oc_1)\) (=\(\PSL(2,\Z[\iu])\)) here. \textit{Mutatis mutandis,} analogous arguments for the remaining Euclidean Bianchi groups are given in the appendix (\cref{app:wordBianchi}).
    There are two parts to our proof: the construction of the word representation and the polynomial runtime.\\
{\bf Construction of the word representation.}
Suppose that \(M = \begin{psmallmatrix} \alpha & \beta \\ \gamma & \delta \end{psmallmatrix} \in \PSL(2,\Z[\iu])\).  
Our first step is to construct an element \(H\in\PSL(2,\Z[\iu])\) such that \(MH\) is upper-triangular (thus reducing the representation problem to that of representing an upper-triangular element). The second step constructs the word representation of an upper-triangular element.

We begin with the first step.
In the case that \(\gamma=0\), we choose \(H = \textrm{Id}_2\).
We continue under the assumption that \(\gamma \neq 0\).
We claim that there is an \(H_1\in\PSL(2,\Z[\iu])\)  such that \(MH_1 = \begin{psmallmatrix} \alpha_1 & \beta_1 \\ \gamma_1 & \delta_1\end{psmallmatrix}\) where \(N(\gamma_1) < N(\gamma)\).
(Here \(N\) is the field norm on \(\Q(\iu)\) with \(N(x+y\iu)=x^2 + y^2\) for \(x+y\iu\in\Z[\iu]\).)
Since \(\Z[\iu]\) equipped with the field norm \(N\) is a Euclidean domain,
we can write 
\(\delta = -\theta_1\gamma + \gamma_1\) with \(\theta_1,\gamma_1\in\Z[\iu]\) and so the inequality
    \(N(\gamma_1) = N(\theta_1\gamma+\delta) < N(\gamma)\)
follows.
Let us write \(\theta_1= -p_1 -q_1\iu\in\Z[\iu]\) in terms of the integral basis \(\{1,\iu\}\), then
\begin{multline*}
    M U^{-q_1} T^{-p_1} A = 
    \begin{pmatrix}
        \alpha & \beta \\ \gamma & \delta
    \end{pmatrix}
    \begin{pmatrix}
        1 & \iu \\ 0 & 1
    \end{pmatrix}^{-q_1}
    \begin{pmatrix}
        1 & 1 \\ 0 & 1
    \end{pmatrix}^{-p_1}
    \begin{pmatrix}
        0 & -1 \\ 1 & \phantom{-}0
    \end{pmatrix}
     \\
     =
    \begin{pmatrix}
        \alpha\ & \theta_1 \alpha  + \beta \\
        \gamma\ & \theta_1 \gamma  + \delta \\
    \end{pmatrix}
    \begin{pmatrix}
        0 & -1 \\ 1 & \phantom{-}0
    \end{pmatrix}
    =
    \begin{pmatrix}
       \theta_1 \alpha  + \beta\  & -\alpha \\
       \theta_1 \gamma  + \delta\  & -\gamma
    \end{pmatrix}
    \eqqcolon
    \begin{pmatrix} \alpha_1 & \beta_1 \\ \gamma_1 & \delta_1\end{pmatrix}.
\end{multline*}
Since \(N(\gamma_1) < N(\gamma)\),
the element \(H_1\coloneqq  U^{-q_1} T^{-p_1} A\) has the claimed properties.

We loop the above construction in order to generate a sequence of matrices of the form \(M H_1 \cdots H_\ell\).
Let \(\gamma_\ell\) be the bottom-left entry of matrix \(M H_1\cdots H_\ell\).
We repeat this process until we obtain a matrix \(M H_1 \cdots H_k\) that satisfies the condition \(\gamma_k=0\).

The following observations guarantee that the above process both terminates and does so correctly.
First, \(N(\gamma_\ell) = |\gamma_\ell|^2\in\Z_{\ge 0}\) for each \(\ell\) since \(\gamma_\ell\in\Z[\iu]\).
Second, the sequence of these norms is strictly decreasing (and so \(N(\gamma_{\ell+1}) \le  N(\gamma_{\ell}) - 1\)
for \(1\le \ell \le k-1\)).
Hence there exists \(k\in\N\) such that \(N(\gamma_k)=0\), from which we deduce that \(\gamma_k=0\).
Thus we have constructed an element \(H_1\cdots H_k \eqqcolon H\in\PSL(2,\Z[\iu])\) such that \(MH\) is upper-triangular. %
For our second step, 
consider
    \begin{equation*}
        MH= \begin{pmatrix} \alpha_k & \beta_k \\ 0 & \delta_k \end{pmatrix} \eqqcolon %
\begin{pmatrix} \rho & \sigma \\ 0 & \tau \end{pmatrix}
    \end{equation*} with \(\rho,\sigma,\tau\in \Z[\iu]\).
Since \(1=\det(MH)=\rho \tau\), we deduce that \(\rho,\tau\in \Z[\iu]^\times = \{\pm1, \pm\iu\}\) and, moreover,  \(\overline{\tau} = \tau^{-1} = \rho\).
The possible pairs \((\rho,\tau)\) ensure that
 \begin{equation*}
     MH =   
    \begin{pmatrix}
        \iu & \phantom{-}0 \\ 0 & -\iu
    \end{pmatrix}^{\epsilon} 
    \begin{pmatrix}
        1 & \sigma' \\ 0 & 1
    \end{pmatrix}
=  
    \begin{pmatrix}
        \iu & \phantom{-}0 \\ 0 & -\iu
    \end{pmatrix}^{\epsilon} 
    \begin{pmatrix}
        1 & 1 \\ 0 & 1
    \end{pmatrix}^{p_0}
       \begin{pmatrix}
        1 & \iu \\ 0 & 1
    \end{pmatrix}^{q_0}
=
     L^\epsilon T^{p_0} U^{q_0}
 \end{equation*}
    for some \(\epsilon\in\{0,1\}\) and \(\sigma' = p_0+q_0\iu\in\Z[\iu]\) (an associate of \(\sigma\)).  

Taken together, these two procedural steps construct representations for elements \(H\in\PSL(2,\Z[\iu])\) and \(L^\epsilon T^{p_0} U^{q_0}\) for which \(M= L^\epsilon T^{p_0} U^{q_0} H^{-1}\). 
Using these products, we can output a word representation of the desired form for a given \(M\).\\
{\bf Polynomial runtime.}
It is useful to introduce the following notations.
For \(1\le \ell \le k\), we shall write
\begin{equation} \label{eq:update}
    M_{\ell} \coloneqq MH_1\cdots H_{\ell} =
    \begin{pmatrix}
    \theta_\ell \alpha_{\ell-1} + \beta_{\ell-1} & -\alpha_{\ell-1} \\ \theta_\ell \gamma_{\ell-1} + \delta_{\ell-1} & - \gamma_{\ell-1}
    \end{pmatrix}
    \eqqcolon 
    \begin{pmatrix}
        \alpha_\ell & \beta_\ell \\ \gamma_\ell & \delta_\ell
    \end{pmatrix}.
\end{equation}

The update matrix \(M_\ell\) is obtained by performing a Euclidean division in the Gaussian integers \(\Oc_1=\Z[\iu]\).
Recall that a division \(a= qb + r\) in the Gaussian integers satisfies \(N(r)< N(b)\) where \(N\) is the associated field norm.
In fact, working in \(\Z[\iu]\) we have the tighter upper bound \(N(r) < \tfrac{1}{2} N(b)\) since \(\tfrac{1}{2}\) is the Euclidean minimum of \(\Z[\iu]\).
Further background on Euclidean minima can be found in \cref{app:QuadraticFields}.

For \(\ell<k\), repeated application of the Euclidean minimum gives
\begin{equation*}
   1\le  %
   N(\gamma_\ell) < \tfrac{1}{2} N(\gamma_{\ell-1}) 
   < \cdots < \tfrac{1}{2^{\ell-1}} N(\gamma_1) < \tfrac{1}{2^\ell} N(\gamma) = \tfrac{1}{2^\ell} |\gamma|^2 \le \tfrac{1}{2^\ell} \|M\|.
\end{equation*}
Taking logarithms, we obtain the upper bound \(\log_{2} \|M\| > \ell\), from which the desired inequality \(1+\log_{2} \|M\| > k\) follows.

We now exhibit bounds on the matrix norms \(\seq[k]{\|M_\ell\|}{\ell=1}\).
The determinant condition on \(M_{\ell}\) tells us that 
\begin{multline*}
    \alpha_\ell \delta_{\ell} - \beta_\ell\gamma_\ell =-(\theta_\ell \alpha_{\ell-1} + \beta_{\ell-1})\gamma_{\ell-1}  + (\theta_\ell \gamma_{\ell-1} + \delta_{\ell-1})\alpha_{\ell-1}  \\
    =  -(\theta_\ell \alpha_{\ell-1} + \beta_{\ell-1})\gamma_{\ell-1}  + \gamma_{\ell}\alpha_{\ell-1} =1.
\end{multline*}
For \(\ell-1<k\), we have that \(\gamma_{\ell-1}\neq 0\) and so
    \begin{multline} \label{eq:normbound}
        |\alpha_{\ell}|^2 = N(\alpha_\ell) = N(\theta_\ell\alpha_{\ell-1} + \beta_\ell) = N\!\left(\frac{\gamma_\ell\alpha_{\ell-1} - 1}{\gamma_{\ell-1}}\right) \le \\ \frac{N(\gamma_\ell)N(\alpha_{\ell-1})}{N(\gamma_{\ell-1})} + \frac{1}{N(\gamma_{\ell-1})} \le \frac{1}{2}N(\alpha_{\ell-1}) + 1.
    \end{multline}
In~\eqref{eq:normbound}, the rightmost inequality follows from the division \(-\delta_{\ell-1}= \theta_{\ell} \gamma_{\ell-1} + \gamma_{\ell}\).

We make the following claim.
\begin{nclam} \label{claim:inequalityPicard}
     For \(M_{\ell-1}\) and \(M_{\ell}\) %
     in \(\PSL(2,\Z[\iu])\) as above,
     we necessarily have that \(\|M_{\ell}\| \le \|M_{\ell-1}\|\).
\end{nclam}

All that remains is to bound  the sizes of the integer exponents in the word representation.
For \(\ell\le k\), we bound the quotients \(\theta_\ell \coloneqq -p_\ell - q_\ell\iu\) as follows, %
\begin{multline*}
    |p_\ell|^2 + |q_\ell|^2 = N(\theta_\ell) = N\!\left( \frac{\delta_{\ell-1}+\gamma_\ell}{\gamma_{\ell-1}} \right) \\ \le N(\delta_{l-1}) + \frac{N(\gamma_\ell)}{N(\gamma_{\ell-1})} \le \|M_{\ell-1}\| + \tfrac{1}{2} \le \|M\| + \tfrac{1}{2}.
\end{multline*}
In the above line, the rightmost inequality follows from \cref{claim:inequalityPicard}.
Since \(N(\theta_\ell), \|M\|\in\Z\), we have \(N(\theta_\ell)\le \|M\|\) and so \(|p_\ell|^2,|q_\ell|^2\le \|M\|\).
We consider the matrix \(MH\) in order to bound \(p_0\) and \(q_0\). Once again, it is clear that
    \begin{equation*}
        |p_0|^2, |q_0|^2 \le |\sigma|^2 \le \max_{1\le i,j \le 2} |(MH)_{ij}|^2 = \|M_k\| \le \|M\|.
    \end{equation*}
The above observations bound the number of iterations in the initial looping procedure as well as the sizes of the exponents in the word representation. Taken together with standard results on the division algorithm, we obtain the stated polynomial runtime. \qed
\end{proof}

\begin{proof}[Proof of \cref{claim:inequalityPicard}]
By \eqref{eq:update}, we have \(|\beta_{\ell}|^2,|\delta_\ell|^2, |\gamma_\ell|^2 \le \|M_{\ell-1}\|\).
Thus \(\|M_{\ell} \| \le \|M_{\ell-1}\|\) unless 
\(|\alpha_\ell|^2 > \|M_{\ell-1}\|\).
Let us assume, for a contradiction, that \(|\alpha_\ell|^2 > \|M_{\ell-1}\|\).
By \eqref{eq:normbound}, it follows that 
    \begin{equation*}
        \|M_{\ell-1}\| < \|M_\ell\| = N(\alpha_\ell) \le \frac{1}{2}N(\alpha_{\ell-1}) + 1 < \frac{1}{2} \|M_{\ell-1}\| + 1,
    \end{equation*}
and so we deduce that \(\|M_{\ell-1}\| < 2\).
Thus \((M_{\ell-1})_{ij}\in\{0,\pm 1, \pm\iu\}\) for each \((i,j)\).

    We first argue that it is not possible that each entry of \(M_{\ell-1}\) is non-zero; for otherwise, each entry is a unit and the determinant condition \(\alpha_{\ell-1}\delta_{\ell-1} - \beta_{\ell-1}\gamma_{\ell-1}=1\) is not satisfied by any tuple of units in \(\Z[\iu]^\times\). 
    The determinant condition also ensures that \(M_{\ell-1}\neq 0_{2\times 2}\).
    Thus \(M_{\ell-1}\) takes one of the following forms
        \begin{equation*}
            \begin{pmatrix}  0 & \beta_{\ell-1} \\ \gamma_{\ell-1} & \delta_{\ell-1}      \end{pmatrix}, \,
            \begin{pmatrix}  0 & \beta_{\ell-1} \\ \gamma_{\ell-1} & 0       \end{pmatrix}, \,
            \begin{pmatrix}  \alpha_{\ell-1} & 0 \\ \gamma_{\ell-1} & \delta_{\ell-1}      \end{pmatrix}, \, \text{or} \,
            \begin{pmatrix}  \alpha_{\ell-1} & \beta_{\ell-1} \\ \gamma_{\ell-1} & 0       \end{pmatrix}
        \end{equation*}
    where an entry not denoted by \(0\) is a unit (i.e., in \(\{\pm 1, \pm \iu\} = \Z[\iu]^\times\)).
    In the first two forms, where \(\alpha_{\ell-1}=0\), it is clear that the update \eqref{eq:update} ensures that \(|\alpha_{\ell}|^2 = |\beta_{\ell-1}|^2 > 0\) and so the required inequality trivially holds. 
    Likewise, the inequality holds for the third form since, by the update \eqref{eq:update}, \(|\alpha_\ell|^2 = |\alpha_{\ell-1}|^2\).
    All that remains is to treat the fourth form. 
    The update to the fourth form gives
    \(\begin{psmallmatrix}
        \alpha_{\ell} & \beta_{\ell} \\ \gamma_{\ell} & \delta_{\ell}
    \end{psmallmatrix}
    = 
        \begin{psmallmatrix}
            \beta_{\ell-1} & \minus \alpha_{\ell-1} \\ 0 & \minus \gamma_{\ell-1}
        \end{psmallmatrix}
    \)
    because \(0 =\theta_\ell \gamma_{\ell-1}+ \gamma_{\ell} = \theta_{\ell} \gamma_{\ell-1}\) implies that \(\theta_{\ell}=0\).
    Thus \(|\alpha_\ell|^2 = |\beta_{\ell-1}|^2 = 1 = |\alpha_{\ell-1}|^2\).
    Thus the required inequality holds. \qed
\end{proof}

As seen in~\cite{bell2017identity}, a procedure that generates a word representation of an element in  \(\PSL(2,\Z)\) implies a procedures that generates the word representation of an element in \(\SL(2,\Z)\).
In a similar vein, we have the following corollary.%
\begin{corollary} \label{cor:wordModular}
    For each of the groups \(\SL(2,\Oc_d)\) with \(d\in\{1,2,3,7,11\}\), there is a procedure that, given an element \(M\in\SL(2,\Oc_d)\) outputs a word representation for \(M\) in terms of the generators of \(\SL(2,\Oc_d)\).
\end{corollary}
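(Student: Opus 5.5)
We lift the procedure of \cref{thm:wordPicard} from \(\PSL(2,\Oc_d)\) to \(\SL(2,\Oc_d)\) by keeping track of the sign that the projective quotient forgets. Take the matrices \(A,T,U\) (and \(L\) for \(d\in\{1,3\}\)) as actual elements of \(\SL(2,\Oc_d)\), and add \(-\Id_2 = A^2\) to the generating set. Given \(M\in\SL(2,\Oc_d)\), first run the procedure of \cref{thm:wordPicard} on its image \(\pm M\in\PSL(2,\Oc_d)\). This yields a word
\[
W = (L^\epsilon T^{p_0}U^{q_0})\, A T^{p_k}U^{q_k}\cdots A T^{p_1}U^{q_1}
\]
(omitting \(L^\epsilon\) when \(d\in\{2,7,11\}\)) such that \(W=\pm M\) in \(\PSL(2,\Oc_d)\).

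Next, evaluate \(W\) as a genuine product of \(2\times 2\) matrices in \(\SL(2,\Oc_d)\). By construction of the procedure, the matrix \(W\) equals \(M\) or \(-M\) in \(\SL(2,\Oc_d)\); this is exactly the statement that the two agree modulo \(\{\pm\Id_2\}\). Computing \(W\) takes polynomially many multiplications, since \(k\) and the exponents are bounded as in \cref{thm:wordPicard}. Compare the result with \(M\):
\begin{itemize}
\item if \(W=M\), output \(W\);
\item if \(W=-M\), output \(A^2W\), which represents \(M\) because \(A^2=-\Id_2\) is central.
\end{itemize}

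The only point needing care is whether each step of the proof of \cref{thm:wordPicard} holds in \(\SL\) and not merely up to sign. The reduction \(MH\) with \(H=H_1\cdots H_k\) is an honest matrix identity, so it is fine. The upper-triangular stage is the delicate one, because it used \(\rho\tau=1\) and the identification with \(L^\epsilon T^{p_0}U^{q_0}\) up to sign.

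To handle this stage, I would rewrite it directly in \(\SL\). We have \(MH=\begin{psmallmatrix}\rho&\sigma\\0&\rho^{-1}\end{psmallmatrix}\) with \(\rho\in\Oc_d^\times\). One factors this as \(D_\rho\begin{psmallmatrix}1&\rho^{-1}\sigma\\0&1\end{psmallmatrix}\), where \(D_\rho=\operatorname{diag}(\rho,\rho^{-1})\), and writes \(\rho^{-1}\sigma=p_0+q_0\zeta\) in the integral basis. It then remains to express \(D_\rho\) for every unit \(\rho\) as a power of \(L\) times \(\pm\Id_2\). For \(d=1\), \(D_{\iu}=L\) and \(D_{-1}=-\Id_2\). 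For \(d=3\), the units are \(\pm\omega^j\), and \(D_{\omega^2}=L\) gives all of them up to \(-\Id_2\). For \(d\in\{2,7,11\}\), the only units are \(\pm1\), so \(D_\rho=\pm\Id_2\). In every case the residual sign is absorbed by \(A^2\).

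This yields a word in the generators \(A,T,U,L\) of \(\SL(2,\Oc_d)\), produced with the same polynomial-time bound as in \cref{thm:wordPicard}.
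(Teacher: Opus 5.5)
Your proposal is correct and follows the route the paper indicates (it gives no explicit proof and defers to the \(\PSL(2,\Z)\)-to-\(\SL(2,\Z)\) lifting of~\cite{bell2017identity}): the reduction \(MH\) is an honest identity in \(\SL(2,\Oc_d)\), so the word from \cref{thm:wordPicard} evaluates to \(\pm M\), and you repair the sign with the central element \(A^2=-\Id_2\). Your explicit factorisation \(MH = D_\rho\begin{psmallmatrix}1&\rho^{-1}\sigma\\0&1\end{psmallmatrix}\), with \(D_\rho\) equal up to sign to a power of \(L\) (or to \(\Id_2\) when \(d\in\{2,7,11\}\)), is a useful detail that the paper leaves implicit.
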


\section{Word Embeddings Into Matrix (Semi)Groups}\label{sec:embedpairs}

Let us first recall two known group embeddings from the literature (cf.~\cite{BP08,CHK99}).
\begin{proposition}\label{prop:classicalGroupEmbedding1}
   Let \(\Sigma_2 = \{a, b\}\). Then \(\varphi:\fg(\Sigma_2)\to \mathbb{Z}^{2 \times 2}\) defined by 
   \begin{equation*}
       \varphi(a) = \begin{pmatrix} 1 & 2 \\ 0 & 1 \end{pmatrix}, \quad
   \varphi(b) = \begin{pmatrix} 1 & 0 \\ 2 & 1 \end{pmatrix}
   \end{equation*}
   is an embedding.
\end{proposition}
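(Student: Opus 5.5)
This is the classical Sanov embedding, and I will prove it with a ping-pong argument. Multiplicativity is automatic: define \(\varphi\) on generators and set \(\varphi(a^{-1}) = \begin{pmatrix} 1 & -2 \\ 0 & 1\end{pmatrix}\) and \(\varphi(b^{-1}) = \begin{pmatrix} 1 & 0 \\ -2 & 1\end{pmatrix}\). Then \(\varphi\) extends to a homomorphism from \(\fg(\Sigma_2)\), because the images are mutually inverse integer matrices. Hence it suffices to show the kernel is trivial, i.e.\ that \(\varphi(w)\neq \Id_2\) for every non-empty reduced word \(w\). Indeed, \(\varphi(u)=\varphi(v)\) gives \(\varphi(uv^{-1})=\Id_2\), and \(uv^{-1}\) reduces to a word that is empty only if \(u=v\).

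The key step is the ping-pong lemma on \(\Z^2\), or on \(\R^2\), acting on column vectors. Let
\[
X = \{(x,y)^\top : |x|>|y|\}, \qquad Y = \{(x,y)^\top : |y|>|x|\}.
\]
For \(n\neq 0\) we have \(\varphi(a)^n = \begin{pmatrix} 1 & 2n \\ 0 & 1\end{pmatrix}\). For \((x,y)^\top\in Y\) the image is \((x+2ny, y)^\top\), and \(|x+2ny|\ge 2|n||y|-|x| > |y|\), so \(\varphi(a)^n Y\subseteq X\). Symmetrically \(\varphi(b)^n X\subseteq Y\).

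Now take a non-empty reduced word and write it as an alternating product of non-zero powers \(a^{n_1}b^{m_1}\cdots\). If \(w\) begins and ends with a power of \(a\), pick a vector \(v\in Y\), say \(v=(0,1)^\top\). Applying the syllables from right to left alternates between \(Y\) and \(X\) and ends in \(X\), so \(\varphi(w)v\in X\). Since \(X\cap Y=\emptyset\), this gives \(\varphi(w)v\neq v\), and therefore \(\varphi(w)\neq\Id_2\).

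The main obstacle is the case where \(w\) is a single power of \(b\), or begins and ends with different letters. I handle this by conjugation. If \(w\) starts and ends with powers of \(b\), swap the roles of \(X\) and \(Y\). If \(w\) starts with \(a\) and ends with \(b\) (or the reverse), conjugate by a suitable power \(a^{k}\), where \(k\) is chosen so that no cancellation occurs with the first syllable \(a^{n_1}\), e.g.\ \(k=-n_1+\) something non-zero. This produces a reduced word that begins and ends with \(a\)-powers and is trivial if and only if \(w\) is. Single syllables \(a^n\) and \(b^n\) are clearly non-identity by direct inspection of the off-diagonal entry \(2n\neq 0\).
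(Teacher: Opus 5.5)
Your argument is correct. The paper itself gives no proof of this proposition: it quotes the embedding as a classical result from the literature (cf.\ \cite{BP08,CHK99}) and uses it only as a reference point. What you supply is the standard ping-pong proof of Sanov's theorem, and it is complete.

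\begin{itemize}
\item The homomorphism property follows because the assigned images of \(a^{\pm1}\) and \(b^{\pm1}\) are mutually inverse integer matrices.
\item The estimate \(|x+2ny|\ge 2|n||y|-|x|>(2|n|-1)|y|\ge |y|\) gives \(\varphi(a)^nY\subseteq X\). The symmetric estimate gives \(\varphi(b)^nX\subseteq Y\).
\item Conjugation reduces the mixed-endpoint case to words that begin and end with \(a\)-powers.
\end{itemize}

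One point needs tightening. Write the conjugate as \(a^{k}wa^{-k}=a^{k+n_1}b^{m_1}\cdots b^{m_r}a^{-k}\). You need both \(k\neq -n_1\) and \(k\neq 0\); if \(k=0\) the word still ends in a \(b\)-power. Your choice ``\(k=-n_1+\)something non-zero'' only secures the first condition. The choice \(k=n_1\) secures both.

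Also, a single power of \(b\) is not really an obstacle. It is already covered by your ``swap \(X\) and \(Y\)'' case.

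Compared with simply citing the result, your route is self-contained and gives slightly more. The same estimate \(|x+n\lambda y|\ge |n||\lambda||y|-|x|\) shows that \(a\mapsto\begin{psmallmatrix}1&\lambda\\ 0&1\end{psmallmatrix}\), \(b\mapsto\begin{psmallmatrix}1&0\\ \lambda&1\end{psmallmatrix}\) is an embedding for every \(\lambda\in\C\) with \(|\lambda|\ge 2\). That is the kind of statement relevant to the paper's \(\Oc_d^{2\times 2}\) setting.
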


\begin{proposition}\label{prop:classicalGroupEmbedding2}
   Let \(\Sigma_2 = \{a, b\}\). Then \(\varphi:\fg(\Sigma_2)\to \mathbb{C}^{2 \times 2}\) defined by
   \begin{equation*}
       \varphi(a) = \begin{pmatrix} \frac{3}{5} + \frac{4}{5}\iu & 0 \\[0.25em] 0 & \frac{3}{5} - \frac{4}{5}\iu \end{pmatrix}, \quad 
   \varphi(b) = \begin{pmatrix} \phantom{-}\frac{3}{5} & \frac{4}{5} \\[0.25em] -\frac{4}{5} & \frac{3}{5} \end{pmatrix}
   \end{equation*}
   is an embedding.
\end{proposition}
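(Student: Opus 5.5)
Since all four matrices \(\varphi(a)^{\pm1}\), \(\varphi(b)^{\pm1}\) are invertible (unitary with determinant \(1\)), \(\varphi\) extends to a group homomorphism on \(\fg(\Sigma_2)\). So the only thing to prove is that its kernel is trivial: \(\varphi(w)\neq \Id_2\) for every non-empty reduced word \(w\). I would prove this with a \(5\)-adic (more precisely, \((2+\iu)\)-adic) argument in \(\Z[\iu]\).

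\textbf{Clearing denominators.} Write \(\varphi(x)=\tfrac15 X\) for each \(x\in\{a,a^{-1},b,b^{-1}\}\). The integral matrices are
\[
A_+=\operatorname{diag}(3+4\iu,\,3-4\iu),\quad A_-=\operatorname{diag}(3-4\iu,\,3+4\iu),\quad B_+=\begin{pmatrix}3&4\\-4&3\end{pmatrix},\quad B_-=\begin{pmatrix}3&-4\\4&3\end{pmatrix},
\]
all with entries in \(\Z[\iu]\). For a reduced word \(w=x_1\cdots x_n\) with \(n\ge1\) we get \(5^n\varphi(w)=X_1\cdots X_n\in\Z[\iu]^{2\times2}\). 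If \(\varphi(w)=\Id_2\), then \(X_1\cdots X_n=5^n\Id_2\), which is \(\equiv 0\) modulo the prime \(\pi=2+\iu\) dividing \(5\). It therefore suffices to show that \(X_1\cdots X_n\not\equiv0 \pmod{\pi}\).

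\textbf{Reducing modulo \(\pi\).} We have \(\Z[\iu]/(\pi)\cong\mathbb{F}_5\) via \(\iu\mapsto 3\). Hence \(3+4\iu\equiv0\) and \(3-4\iu\equiv1\), which gives
\[
\bar A_+=\operatorname{diag}(0,1),\quad \bar A_-=\operatorname{diag}(1,0),\quad \bar B_+=\begin{pmatrix}3&4\\1&3\end{pmatrix},\quad \bar B_-=\begin{pmatrix}3&1\\4&3\end{pmatrix}.
\]
Each of these has determinant \(\equiv 0\) but is non-zero, so each is rank one. Write \(\bar X=u_Xv_X^{T}\), where \(u_X\) spans the image and \(v_X^{\perp}\) is the kernel. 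Then
\[
\bar X_1\cdots\bar X_n=\Bigl(\prod_{j=1}^{n-1} v_{X_j}^{T}u_{X_{j+1}}\Bigr)\,u_{X_1}v_{X_n}^{T},
\]
so the product is non-zero if and only if every consecutive pair satisfies \(\operatorname{im}\bar X_{j+1}\not\subseteq\ker\bar X_j\).

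\textbf{Checking the pairs.} The images and kernels are as follows:
\begin{itemize}
\item \(\bar A_+\): image \(\langle e_2\rangle\), kernel \(\langle e_1\rangle\);
\item \(\bar A_-\): image \(\langle e_1\rangle\), kernel \(\langle e_2\rangle\);
\item \(\bar B_+\): image \(\langle(3,1)\rangle\), kernel \(\langle(2,1)\rangle\);
\item \(\bar B_-\): image \(\langle(3,4)\rangle\), kernel \(\langle(1,2)\rangle\).
\end{itemize}
A direct check of proportionality over \(\mathbb{F}_5\) shows that the condition fails exactly for the pairs \((A_+,A_-)\), \((A_-,A_+)\), \((B_+,B_-)\) and \((B_-,B_+)\). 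These are precisely the adjacent pairs \(xx^{-1}\), which cannot occur in a reduced word. Hence \(X_1\cdots X_n\not\equiv0\pmod\pi\), and so \(\varphi(w)\neq\Id_2\).

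\textbf{Main obstacle.} The main step is the choice of prime and the verification that the forbidden pairs coincide exactly with cancelling pairs. This works because \(3+4\iu=(2+\iu)^2\) vanishes modulo \(\pi\) while its conjugate does not. The remaining work is a finite table of sixteen \(\mathbb{F}_5\)-computations, which I would carry out explicitly.
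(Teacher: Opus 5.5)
Your proof is correct. It takes a genuinely different route from the paper, which gives no proof of its own here. The paper simply recalls the proposition as a known embedding from the literature (\cite{BP08,CHK99}). There, $\varphi(a)$ and $\varphi(b)$ are the standard complex $2\times 2$ forms of the unit quaternions $\frac35+\frac45\mathbf{i}$ and $\frac35+\frac45\mathbf{j}$, and freeness is taken from known results on free groups of quaternions and rotations.

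You instead give a self-contained finite verification. Clearing denominators and reducing modulo the split prime $\pi=2+\iu$ (using $3+4\iu=\pi^2$) turns all four generators into rank-one idempotents over $\mathbb{F}_5$. Your images and kernels are correct, and the sixteen pair conditions confirm that $\bar X\bar Y=0$ exactly for the four cancelling pairs.

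The citation buys brevity and places the result in the quaternion framework the authors rely on. Your argument buys independence from that literature, plus some extra information for free:
\begin{itemize}
\item Since some entry of $5^{|w|}\varphi(w)$ is a $\pi$-adic unit and $v_\pi(5)=1$, you get $|w|=-\min_{i,j}v_\pi(\varphi(w)_{ij})$.
\item The reduction of $5^{|w|}\varphi(w)$ is a non-zero multiple of $u_{X_1}v_{X_n}^{T}$, and the four images (and the four kernels) are pairwise distinct lines. So the first and last letters of $w$ can be read off from $\varphi(w)$, which gives a ping-pong style decoding of $w$ from its image.
\end{itemize}
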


For both of the above embeddings, all four sets 
\(\{x\in \fg(\Sigma_2) : (\varphi(x))_{ij} \neq 0\}\) (where \(1\le i,j \le 2\)) are non-empty, i.e., the codomains of both embeddings contain matrices with non-zero values in each of the four entries.
For comparison, let \(\psi\) denote Paterson's classical embedding for pairs of binary semigroup words into \(\N^{3\times3}\)~\cite{Paterson70}.  Then \(\{x\in \sg(\Sigma_2)\times \sg(\Sigma_2) : (\psi(x))_{21} \neq 0\}\) is empty, i.e., Paterson's embedding uses only the elements on and above the main diagonal.
Further, it is known that one cannot embed \(\fg(\Sigma_2)\) into the upper triangular matrices of any dimension. That is to say,

\begin{proposition}\label{thm:NoIntoUCn}
There is no embedding from \(\fg(\Sigma_2)\) into \(\ut(n,\C)\) for any $n$. Here \(\ut(\mathbb{C},n)\) is the group of \(n\times n\) upper triangular matrices with complex entries.
\end{proposition}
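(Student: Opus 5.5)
The key fact is that the group of invertible upper triangular complex matrices is solvable, whereas the free group of rank two is not solvable. An embedding would make \(\fg(\Sigma_2)\) isomorphic to a subgroup of a solvable group, which is impossible. I will show this directly through commutators, so that the argument stays self-contained.

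First, any morphism \(\varphi\) from the group \(\fg(\Sigma_2)\) lands in invertible matrices, since \(\varphi(a)\varphi(a^{-1})=\varphi(\varepsilon)\) is the identity of the image. So the image lies in the group \(B_n\) of invertible upper triangular matrices. Define \(N_j\) (for \(j\ge 1\)) as the set of unipotent upper triangular matrices \(I+X\), where \(X\) is strictly upper triangular with \(X_{rs}=0\) whenever \(s-r<j\). Then \(N_n=\{I\}\).

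The plan has three facts to verify.
\begin{enumerate}
\item The diagonal map \(B_n\to(\C^\times)^n\) is a homomorphism to an abelian group, so every commutator in \(B_n\) lies in \(N_1\).
\item \(N_j N_{j'}\subseteq N_{j+j'}\)-type estimates give \([N_j,N_j]\subseteq N_{2j}\). Concretely, \((I+X)(I+Y)\) and \((I+Y)(I+X)\) agree modulo terms of "superdiagonal level" at least \(2j\).
\item Combining these, the \(m\)-th derived subgroup \(B_n^{(m)}\) lies in \(N_{2^{m-1}}\). So \(B_n^{(m)}=\{I\}\) once \(2^{m-1}\ge n\).
\end{enumerate}

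Next I show that \(\fg(\Sigma_2)\) has nontrivial derived subgroups of every order. Set \(w_1=[a,b]=aba^{-1}b^{-1}\) and \(w'_1=[a,b^{-1}]\), both reduced and nontrivial. Define \(w_{m+1}=[w_m,w'_m]\) and \(w'_{m+1}=[w_m,w'^{-1}_m]\), or some similar recursion.

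The cleanest argument is that in a free group two elements commute iff they are powers of a common element. So it suffices to keep \(w_m,w'_m\) non-commuting at each stage. This is the step I expect to need the most care. The alternative is to cite that subgroups of free groups are free (Nielsen–Schreier) and that the derived subgroup of a non-abelian free group is non-abelian free. Then each \(\fg(\Sigma_2)^{(m)}\) is a nontrivial free group, and the induction is immediate.

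Finally, suppose \(\varphi\) were injective. Since morphisms map derived subgroups into derived subgroups, \(\varphi(\fg(\Sigma_2)^{(m)})\subseteq B_n^{(m)}=\{I\}\) for large \(m\). This contradicts injectivity, because \(\fg(\Sigma_2)^{(m)}\) contains a non-identity element.
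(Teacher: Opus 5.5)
Your proposal is correct and follows the paper's argument: the invertible upper triangular group is solvable while \(\fg(\Sigma_2)\) is not. The paper cites both facts from Rotman, whereas you prove them, via the bound \(B_n^{(m)}\subseteq N_{2^{m-1}}\) and via Nielsen--Schreier (the one cosmetic slip is that step 2 needs \(XY\) to have superdiagonal level \(\ge j+j'\), not \(N_jN_{j'}\subseteq N_{j+j'}\), and your concrete computation already uses the former).
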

The result in \cref{thm:NoIntoUCn} (and the generalisation that one cannot embed \(\fg(\Sigma_2)\) into any solvable group) follows from standard results in group theory involving normal series~\cite[Chapter 5]{Rotman1995}.

We briefly sketch the proof of \cref{thm:NoIntoUCn}.
Recall that \(\ut(n,\C)\) is solvable and that every subgroup of a solvable group is itself solvable~\cite[Theorem 5.15]{Rotman1995}.  Further,  no subgroup of a solvable group is isomorphic to \(\fg(\Sigma_2)\) (cf.~\cite[Chapter 5.3]{Rotman1995}).
The existence of an embedding \(\varphi\colon \fg(\Sigma_2) \to \ut(n,\C)\) contradicts the above observation. that \(\fg(\Sigma_2)\) is isomorphic to some subgroup of \(\ut(n,\C)\).

The analogous result (to~\cref{thm:NoIntoUCn}) for semigroups is as follows: one cannot embed \(\sg(\Sigma_2)\) into any nilpotent group. This follows from the observation that the number of words in \(\sg(\Sigma_2)\) with length at most \(n\) grows exponentially, whilst nilpotent groups exhibit polynomial growth~\cite{Milnor1968}.

Motivated by the above existence and non-existence results for embeddings,
it is interesting to investigate the (algebraic) complexity of the word structures one can embed into restricted groups and semigroups of low-dimensional matrices.
The next two propositions provide a summary of our results.
For the rest of the section, we assume that \(\Sigma_2=\{a,b\}\) and \(\Sigma_1=\{c\}\).

First, we consider some existence results.
The details of the embeddings build on the well-known embeddings from the literature and can be found in \cref{app:nonexistence}.
\begin{restatable}{proposition}{embeddingexistence}
\label{prop:ExistingEmbeddings} \hspace{0.2em} \hfil
\vspace{-0.5em}
\begin{itemize}
 \item There exists an embedding from \(\fg(\Sigma_1)\times\fg(\Sigma_1)\) into \(\ut(2,\N)\).
 \item There exists an embedding from \(\fg(\Sigma_2)\times\fg(\Sigma_1)\) into \(\Q^{2\times2}\).
 \item There exists an embedding from \(\sg(\Sigma_2)\times\fg(\Sigma_1)\) into \(\Z^{3\times3}\).
 \item There exists an embedding from \(\fg(\Sigma_2)\times\sg(\Sigma_1)\) into \(\Z^{2\times2}\).
 \end{itemize}
\end{restatable}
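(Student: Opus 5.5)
The plan is to write down an explicit morphism and check injectivity. Before that I must settle what the first item can mean, because the literal reading creates a difficulty I have not resolved. Let \(c_1\) and \(c_2\) generate the two copies of \(\fg(\Sigma_1)\). Then \(\fg(\Sigma_1)\times\fg(\Sigma_1)\cong\Z^2\) is an abelian group, and every element is \((c_1^m,c_2^n)\) with \(m,n\in\Z\).

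First I would isolate the constraint coming from the codomain. The image of a group morphism is closed under inverses. So every image matrix \(X\) must lie in \(\ut(2,\N)\) together with \(X^{-1}\). For \(X=\begin{psmallmatrix} x & y\\ 0 & z\end{psmallmatrix}\) with \(x,y,z\in\N\), having \(X^{-1}\) with entries in \(\N\) forces \(x=z=1\). The off-diagonal entry of \(X^{-1}\) is then \(-y\), so \(y=0\). Hence the image would be trivial. Read literally, with \(\ut(2,\N)\) the upper triangular matrices with entries in \(\N\), the item therefore looks false. I expect this to be the main obstacle, and I would first re-check the paper's intended meaning of \(\ut(2,\N)\) before going further.

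The intended content, consistent with the blue cell of \cref{table:exists} and its \(\fg(\Sigma_1)\) column (\(U(2,\Z)\), folklore), is an upper triangular \(2\times2\) realisation of \(\Z^2\). For this I would use the commuting family
\begin{equation*}
\varphi(c_1^m,c_2^n)=\begin{pmatrix} 2^m & 0\\ 0 & 3^n\end{pmatrix}.
\end{equation*}
It is a morphism because diagonal matrices multiply entrywise. It is injective because \(2^m=2^{m'}\) and \(3^n=3^{n'}\) force \(m=m'\) and \(n=n'\). Its image is upper triangular, but the entries lie in \(\Q\) rather than \(\N\).

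If entries must lie in \(\Z\), I would instead try a unipotent family \(\begin{psmallmatrix} 1 & m+n\lambda\\ 0 & 1\end{psmallmatrix}\). This is a morphism for any \(\lambda\), and injective exactly when \(\lambda\) is irrational. The cost is that the entries are no longer integers. Since \(\Z^2\) is not cyclic and \(\Z\) has rank one, I do not see how to avoid both costs at once.

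So the plan is to verify the morphism property and injectivity for these candidates, both of which are routine. I would present the inverse-closure argument explicitly to show that no natural-number realisation in the literal sense exists. The open step is reconciling this with the stated codomain \(\ut(2,\N)\), and that is the step I expect to block a proof of the item exactly as worded.
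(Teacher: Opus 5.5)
Your proposal has a genuine gap: it discusses only the first item. The second, third and fourth items are never addressed, so most of the proposition is unproved, and these are the items that need actual constructions.

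\begin{itemize}
\item For the second and fourth items, take the free pair of \cref{prop:classicalGroupEmbedding1} for \(a,b\) and send \(c\mapsto 2\Id_2\). Since \(2\Id_2\) is central, this is a morphism on the direct product. The determinant \(\det\varphi(w,c^n)=4^n\) recovers \(n\), and \(2^{-n}\varphi(w,c^n)\) then recovers \(w\) by freeness. Negative powers of \(c\) are what force \(\Q\) in the second item. In the fourth item only \(n\ge 0\) occurs, so the entries stay in \(\Z\).
\item For the third item you need an integer matrix of infinite order, with integer inverse, that commutes with a free \(2\)-generator semigroup. The paper takes \(a\mapsto\mathrm{diag}(2,1,1)\) and \(b\mapsto\begin{psmallmatrix}2&1&0\\0&1&0\\0&0&1\end{psmallmatrix}\), a binary counter in the top-left block, together with a unipotent image of \(c\) sharing the middle index. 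Placement matters here. With the \(2\) in position \((2,3)\), as printed in the paper, the images \(B,C\) of \(b,c\) do not commute: \(BC-CB\) has a \(2\) in position \((1,3)\). Instead take \(c\mapsto\begin{psmallmatrix}1&0&0\\0&1&0\\0&2&1\end{psmallmatrix}\). This commutes with both images and gives the injective images \(\begin{psmallmatrix}2^{|w|}&x_w&0\\0&1&0\\0&2n&1\end{psmallmatrix}\), where \(x_w\in[0,2^{|w|})\) records in binary the positions of \(b\) in \(w\).
\end{itemize}

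On the first item, you are right that the literal statement is false, and this is a typo in the statement. The blue \(\ut(2,\N)\) entry of \cref{table:exists} sits in the \(\sg(\Sigma_1)\) row and \(\sg(\Sigma_1)\) column; the \(\fg(\Sigma_1)\times\fg(\Sigma_1)\) cell lists \(\Q\), and the \(U(2,\Z)\) you cite belongs to the \(\{\varepsilon\}\) row. The paper's proof accordingly builds \(\varphi_1\) on \(\sg(\Sigma_1)\times\sg(\Sigma_1)\), via \(a\mapsto\mathrm{diag}(2,1)\) and \(c\mapsto\mathrm{diag}(1,2)\). Your own map \(\mathrm{diag}(2^m,3^n)\), restricted to \(m,n\ge 0\), already proves this intended version. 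So the step you left open disappears, and the unipotent detour with irrational \(\lambda\) is unnecessary.

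There is also a small hole in your falsity argument. The paper's notion of morphism requires only multiplicativity, so \(\varphi(\varepsilon)\) need only be an idempotent \(E\), not \(\Id_2\), and inverse-closure holds relative to \(E\). You should also rule out \(E=0\), \(E=\begin{psmallmatrix}1&y\\0&0\end{psmallmatrix}\) and \(E=\begin{psmallmatrix}0&y\\0&1\end{psmallmatrix}\). This is easy: at these idempotents \(EXE\) is a natural-number multiple of \(E\), so the subgroup with identity \(E\) is trivial.
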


We now turn our attention to non-existence results in similar settings. 

\begin{restatable}{proposition}{embeddingnonexistence}
\label{prop:NonExistingEmbeddings}
   Let \(d\in\{1,2,3,7,11\}\).
    \vspace{-0.5em}
\begin{itemize}
\item There is no embedding from \(\sg(\Sigma_2)\times\fg(\Sigma_1)\) into \(\Oc_d^{2\times2}\).
\vspace{0.25em}\item There is no embedding from \(\fg(\Sigma_2)\times\fg(\Sigma_1)\) into \(\Oc_d^{2\times2}\).
\vspace{0.25em}\item There is no embedding from \(\fg(\Sigma_2)\times\sg(\Sigma_1)\) into \(\mathrm{SL}(2,\Oc_d)\).
\vspace{0.25em}\item There is no embedding from \(\sg(\Sigma_2)\times\sg(\Sigma_2)\) into \(\mathrm{SL}(3,\Oc_d)\).
\vspace{0.25em}\item There is no embedding from \(\fg(\Sigma_2)\times\sg(\Sigma_2)\) into \(\Oc_d^{3\times3}\).
\end{itemize}
\end{restatable}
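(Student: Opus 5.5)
The first three items all come from one centraliser argument in dimension two. The last two need a structural analysis in dimension three.

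\textbf{The $2\times 2$ items.} Suppose $\varphi$ is an embedding from $X\times Y$, where $X\in\{\sg(\Sigma_2),\fg(\Sigma_2)\}$ and $Y$ is the cyclic (semi)group generated by $c$. Put $C=\varphi(c)$.

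\begin{enumerate}
\item $C$ commutes with $\varphi(a)$ and $\varphi(b)$.
\item If $C$ were not scalar, its centraliser in $\C^{2\times2}$ would be the commutative algebra $\{x\Id_2+yC\}$. Then $\varphi(ab)=\varphi(ba)$, contradicting injectivity since $ab\neq ba$ in both $\sg(\Sigma_2)$ and $\fg(\Sigma_2)$. Hence $C=\lambda\Id_2$ with $\lambda\in\Oc_d$.
\item When a factor $\fg(\Sigma_1)$ or $\fg(\Sigma_2)$ is present, I first normalise its image. The image of a group under a monoid morphism into $\Oc_d^{2\times2}$ is a group whose identity $E$ is idempotent, so $E\in\{0,\Id_2\}$ or $E$ has rank one.
\item If $E$ has rank at most one, every image element has the form $E M E$. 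Such elements act on a space of dimension at most one, so they commute, again contradicting $ab\neq ba$. Hence $E=\Id_2$, and $\varphi(c^{-1})=C^{-1}\in\Oc_d^{2\times2}$ (or $\varphi(a^{\pm1})$ invertible, as needed).
\item For item 1 and item 2 this gives $\lambda,\lambda^{-1}\in\Oc_d$, so $\lambda$ is a unit. The units of $\Oc_d$ for $d\in\{1,2,3,7,11\}$ are roots of unity of order dividing $4$ or $6$. Thus $C^{12}=\Id_2=\varphi(\varepsilon)$, contradicting injectivity on $c^{12}\neq\varepsilon$.
\item For item 3, $\det C=\lambda^2=1$ forces $C=\pm\Id_2$, so $C^2=C^0$ or $C^2=C^4$ in the semigroup. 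This contradicts injectivity.
\end{enumerate}

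\textbf{The $3\times3$ items.} Here the statement is genuinely arithmetic, because $\sg(\Sigma_2)\times\sg(\Sigma_2)$ does embed into $\SL(3,\Q)$. The plan adapts the argument of \cite{KNP18} for $\SL(3,\Z)$ to $\Oc_d$. Let $\mathcal A$ and $\mathcal B$ be the $\C$-algebras generated by the images of the two factors. They commute elementwise, and both are non-commutative.

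I would first show that a non-commutative subalgebra of $\C^{3\times3}$ whose commutant is also non-commutative must respect a decomposition $\C^3=V_1\oplus V_2$ up to a flag. Concretely, there is a one-dimensional joint invariant subspace or quotient on which one algebra acts by scalars. On the complementary two-dimensional piece, one factor acts by scalars $\lambda(\cdot)$ and the other acts non-commutatively. I would argue this by considering the possible dimensions of $\mathcal A$ and its commutant, with $\dim\mathcal A\cdot\dim\mathcal A'\le 9$-type bounds and a case split on semisimple versus non-semisimple parts.

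In suitable coordinates (over $\overline{\Q}$, with integrality controlled by taking characteristic polynomials), each image $\varphi(y)$ of the second factor then acts as a scalar $\lambda(y)$ on the two-dimensional piece and as $\mu(y)$ on the complementary one-dimensional piece. The functions $\lambda,\mu$ are multiplicative, and their values are eigenvalues of integral matrices, so they are algebraic integers in a bounded-degree extension.

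\begin{itemize}
\item In $\SL(3,\Oc_d)$ we have $\lambda^2\mu=1$, so $\lambda$ and $\mu$ are units. For item 4 I then need to show they are roots of unity, or else derive a non-injectivity from the rank-one commuting structure, as Paterson-type embeddings cannot be unimodular.
\item For item 5 the group factor $\fg(\Sigma_2)$ provides inverses, and the same normalisation as in the $2\times2$ case makes its values units. The non-commutative free group factor, restricted to the two-dimensional block, is then pushed into a situation mirroring item 2.
\end{itemize}

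\textbf{Main obstacle.} I expect the hardest step to be this $3\times3$ structural classification together with controlling integrality once one has conjugated. The conjugating matrix need not lie in $\Oc_d$, so I would work only with eigenvalues and characteristic polynomials, which stay integral. The key point to nail down is that the scalar characters $\lambda,\mu$ take values in a finite group of roots of unity over the imaginary quadratic field. Once that holds, a nontrivial word $y$ with $\varphi(xy)=\varphi(xy')$ follows as in the $2\times 2$ case.
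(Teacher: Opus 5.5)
Your argument for the three $2\times2$ items is correct, and it is shorter than the paper's. The paper compares entries of $\varphi((a,\varepsilon)(\varepsilon,c))$ and $\varphi((\varepsilon,c)(a,\varepsilon))$ for items 1--2, and runs a Jordan-form case split on $\varphi(a,\varepsilon)$ for item 3. You instead get $C=\lambda\Id_2$ at once, because a non-scalar $2\times2$ matrix has commutative centraliser $\C[C]$. You also justify $\varphi(\varepsilon,\varepsilon)=\Id_2$, which the paper tacitly assumes when it writes $\varphi(\varepsilon,c^{-1})$ as the adjugate divided by the determinant.

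The $3\times3$ items, however, are only a plan, with two genuine gaps. (The paper itself merely defers these items to an adaptation of \cite{KNP18}, so a completed version of your route would be a different proof.) First, the structural lemma is asserted rather than proved; a ``$\dim\mathcal A\cdot\dim\mathcal A'\le 9$'' heuristic is not an argument. One way to prove it: if $\mathcal A$ had a composition factor of dimension $\ge 2$ on $\C^3$, Schur's lemma would make $\mathcal A'$ commutative. (If the composition factors have dimensions $2$ and $1$, the endomorphism algebra of $\C^3$ is $\C\times\C$ or $\C$, according as the extension splits or not.) So both algebras are triangularisable, and a Wedderburn--Malcev splitting, whose idempotents one checks must have ranks $1$ and $2$, gives your normal form. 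This step alone settles item 5 with no arithmetic. After normalising $\varphi(\varepsilon,\varepsilon)=\Id_3$ as in your $2\times2$ step, the image of $\fg(\Sigma_2)$ would be simultaneously triangularisable, hence solvable, which $\fg(\Sigma_2)$ is not.

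Second, and this is the heart of item 4, the ``key point'' you flag does not follow from what you propose. Knowing that $\lambda(y),\mu(y)$ are units in a bounded-degree extension of $\Q(\sqrt{-d})$ does not make them roots of unity: extensions of degree $2$ or $3$ are totally complex of degree $4$ or $6$, with unit rank $1$ or $2$. The missing idea is that the multiplicity pattern forces these eigenvalues into $\Q(\sqrt{-d})$ itself. The characteristic polynomial $\chi=(x-\lambda)^2(x-\mu)$ of $\varphi(\varepsilon,y)$ lies in $\Oc_d[x]$. If $\lambda\neq\mu$, then $x-\lambda=\gcd(\chi,\chi')$ has coefficients in $\Q(\sqrt{-d})$; if $\lambda=\mu$, then $3\lambda=\operatorname{tr}\varphi(\varepsilon,y)$. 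Either way $\lambda,\mu\in\Oc_d$, and $\lambda^2\mu=1$ puts them in the finite group $\Oc_d^\times$, so $\lambda^{12}=\mu^{12}=1$.

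Even then the conclusion is not ``as in the $2\times2$ case'', because $\varphi(\varepsilon,y^{12})$ need not equal $\Id_3$. Rather, it is $\Id_3+N$ with $N(\C^3)\subseteq S\subseteq\ker N$, where $S$ is the joint invariant subspace of your flag. Any two such $N,N'$ satisfy $NN'=N'N=0$. Hence $\varphi(\varepsilon,a^{12}b^{12})=\varphi(\varepsilon,b^{12}a^{12})$, contradicting injectivity.
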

\begin{proof}
Let us prove the first and second statements.
Proofs of the other three statements use similar approaches akin to techniques found in~\cite{CHK99,KNP18}. 
Additional details can be found in \cref{app:nonexistence}.

Let \(\Sigma_2=\{a,b\}\) and \(\Sigma_1=\{c\}\).
Assume, for a contradiction, that there exists an embedding \(\varphi:\sg(\Sigma_2)\times\fg(\Sigma_1) \to \Oc_d^{2\times2}\).
Embedding \(\varphi\) maps the generators as follows:
\begin{align*}
    (a,\varepsilon)\mapsto \begin{pmatrix}
        a_1&a_2\\a_3&a_4
    \end{pmatrix}, \quad
    &(b,\varepsilon)\mapsto \begin{pmatrix}
        b_1&b_2\\b_3&b_4 \\
    \end{pmatrix},  \\
    (\varepsilon,c)\mapsto \begin{pmatrix}
        c_1&c_2\\c_3&c_4
    \end{pmatrix}, \quad
    &(\varepsilon,c^{-1})\mapsto \frac{1}{c_1c_4-c_2c_3}\begin{pmatrix}
        \phantom{-}c_4&-c_2\\-c_3& \phantom{-}c_1
    \end{pmatrix}.
\end{align*}
The last mapping implies that \(c_1c_4-c_2c_3 \in \Oc_d^\times\); that is, \(c_1c_4-c_2c_3\) is a unit.
Furthermore, we have the following relations:
\begin{equation}
\begin{aligned}\label{eq:rels}
    \varphi((a,\varepsilon)(b,\varepsilon))&\neq\varphi((b,\varepsilon)(a,\varepsilon)), \\
    \varphi((a,\varepsilon)(\varepsilon,c)) &= \varphi((\varepsilon,c)(a,\varepsilon)), \\
    \varphi((b,\varepsilon)(\varepsilon,c)) &= \varphi((\varepsilon,c)(b,\varepsilon)).
\end{aligned}
\end{equation}

Let us consider the products \(\varphi((a,\varepsilon)(\varepsilon,c))\) and \(\varphi((\varepsilon,c)(a,\varepsilon))\):
\begin{align}\label{eq:ac}
    \varphi((a,\varepsilon)(\varepsilon,c)) &=  \begin{pmatrix}
        a_1c_1+a_2c_3\ & a_1c_2+a_2c_4 \\ a_3c_1+a_4c_3\ & a_3c_2+a_4c_4
    \end{pmatrix}, \\
    \varphi((\varepsilon,c)(a,\varepsilon)) &= \begin{pmatrix} \label{eq:ca}
        a_1c_1+a_3c_2\ & a_2c_1+a_4c_2 \\ a_1c_3+a_3c_4\ & a_2c_3+a_4c_4
    \end{pmatrix}.
\end{align}
In order to derive a contradiction, we first assume that \(c_2=0\).
Under this assumption, the top-right corner elements in the matrices \(\varphi((a,\varepsilon)(\varepsilon,c))\) and \(\varphi((\varepsilon,c)(a,\varepsilon))\) are \(a_2c_4\) and \(a_2c_1\), respectively.
By \eqref{eq:rels}, we know these two matrices are equal and so infer that \(c_1=c_4\).
Analogously reasoning about the bottom-left corner entries, we have that \(a_1=a_4\). 
From the third relation in \eqref{eq:rels}, we similarly deduce that \(b_1=b_4\).

We now turn our attention to the bottom-right corner entries in \eqref{eq:ac} and \eqref{eq:ca}, which, from the preceding work, are \(a_1c_1\) and \(a_2c_3+a_1c_1\), respectively.
By \eqref{eq:rels}, these entries are equal.
Thus we have that either \(a_2=0\) (and analogously \(b_2=0\)) or \(c_3=0\).
In the latter case, \(\varphi(\varepsilon,c)=\begin{psmallmatrix}
    c_1&0\\0&c_1
\end{psmallmatrix}\) and \(\varphi(\varepsilon,c^{-1})=\frac{1}{c_1^2}\begin{psmallmatrix}
    c_1&0\\0&c_1
\end{psmallmatrix}\).
This implies that \(c_1\in \Oc_d^\times\).
For each \(d\), the elements of \(\Oc_d^\times\) are listed in \cref{lem:unitintegers}. 
It is straightforward to see that \((\varphi(\varepsilon,c))^k=\Id_2\) for some \(k\ge1\), which contradicts our assumption that \(\varphi\) is an embedding.
As an example, consider \(d=3\) and the unit \(\omega=\tfrac{1}{2}+\tfrac{\sqrt{-3}}{2}\).
Then \(\varphi(\varepsilon,c)^6=\begin{psmallmatrix}
    \omega&0\\0&\omega
\end{psmallmatrix}^6=\Id_2\).
We deduce that \(c_3\neq 0\).

Let us turn to the second subcase, where we assume that
\(a_2 = b_2=0\).
Under this assumption,
we obtain the following equality
\begin{equation*}
    \varphi((a,\varepsilon)(b,\varepsilon))= \begin{pmatrix}
        a_1b_1&0\\a_3b_1+a_1b_3\ &a_1b_1
    \end{pmatrix}
    =
    \varphi((b,\varepsilon)(a,\varepsilon)),
\end{equation*}
which contradicts the inequality in \eqref{eq:rels}.
{We derive an analogous contradiction if 
if we instead work under the assumption that \(c_3=0\) (rather than \(c_2=0\)).
}

Observe that in equations~\eqref{eq:ac} and \eqref{eq:ca}, the top-left corner elements are equal if and only if \(a_2c_3=a_3c_2\).
The preceding work showed that both \(c_2\) and \(c_3\) are non-zero.
We note that \(a_2\) and \(a_3\) (and \(b_2\) and \(b_3\)) are also non-zero.
If this was not the case, then
\begin{align}
    \varphi((a,\varepsilon)(b,\varepsilon))= \begin{pmatrix}
        a_1b_1&a_1b_2\\a_4b_3&a_4b_4
    \end{pmatrix} \quad \text{and} \quad
    \varphi((b,\varepsilon)(a,\varepsilon))= \begin{pmatrix}
        a_1b_1&a_4b_2\\a_1b_3&a_4b_4
    \end{pmatrix} \label{eq:align1}
\intertext{as well as}
    \varphi((a,\varepsilon)(\varepsilon,c)) = \begin{pmatrix}
        a_1c_1 & a_1c_2\\ a_4c_3 & a_4c_4
    \end{pmatrix} \quad \text{and} \quad
    \varphi((\varepsilon,c)(a,\varepsilon)) = \begin{pmatrix}
        a_1c_1&a_4c_2 \\ a_1c_3 & a_4c_4
    \end{pmatrix}. \label{eq:align2}
\end{align}
By \eqref{eq:rels}, the two matrices at \eqref{eq:align1} are not equal, whilst the two matrices at \eqref{eq:align2} are equal.
For both of these statements to be true,
both \(a_1\neq a_4\) and \(a_1=a_4\) have to hold simultaneously, a contradiction.

Finally, observe that the top-left corners of the products imply that
\(a_2c_3=a_3c_2\), \(b_2c_3=b_3c_2\), and \(a_2b_3\neq a_3b_2\).
Since \(a_2,a_3,b_2,b_3,c_2\) and \(c_3\) are all non-zero, we have \(\frac{a_2}{a_3}=\frac{c_2}{c_3}=\frac{b_2}{b_3}\) which contradicts the inequality \(a_2b_3\neq a_3b_2\).
Thus there is no embedding from \(\sg(\Sigma_2)\times\fg(\Sigma_1)\) into \(\Oc_d^{2\times2}\).

The second statement follows directly: if there is no embedding from a semigroup alphabet, then there is also no embedding from a group alphabet. \qed
\end{proof}

\begin{remark}
In the proof of \cref{prop:NonExistingEmbeddings}, we use the fact that the unit group \(\Oc_d^\times\) is finite.
\textit{Mutatis mutandis,} the same non-existence results hold for any integer ring with a torsion unit group.
By Dirichlet's unit theorem (cf.~\cite[Chapter 1, Theorem 7.4]{neukirch1999algebraic}), 
a number field \(\K\) has a torsion unit group (i.e., every unit has finite order) if and only if 
\(\K=\Q\) or \(\K=\Q(\sqrt{-d})\) for square-free \(d\ge 1\).
Thus the non-existence statements in \cref{prop:NonExistingEmbeddings} for embeddings into \(\Oc_d^{2\times 2}\) with \(d\in\{1,2,3,7,11\}\) also hold for the codomains \(\Z^{2\times 2}\) and \(\Oc_d^{2\times 2}\) for square-free \(d\ge 1\).
\end{remark}

\section{Conclusion and Future Directions}

In \cref{sec:wordrep}, we presented the first step towards showing the decidability of the identity problem for matrices over \(\SL(2,\Oc_d)\).
Indeed, in \cite{bell2017identity}, the word representation algorithm for \(\SL(2,\Z)\) was used to construct finite petal graphs where the existence of a short path can be checked in \(\NP\).
The construction utilised a complete, confluent, and monadic term rewriting system for the generators of \(\SL(2,\Z)\).
No such system is known for Bianchi groups.
One can obtain either a confluent or a monadic rewriting system, but this is not sufficient for the construction as the finiteness of the petal graph is no longer guaranteed.
A pertinent question is whether there exists a confluent and monadic rewriting system for \(\SL(2,\Oc_d)\) or whether one can ensure that the resulting petal graph is finite.

In \cref{sec:embedpairs}, we considered various matrix semigroups and showed that there is no way to embed pairs of words over different alphabets.
It is interesting to ask, how narrow is the boundary between existence and non-existence?
In particular, when considering \(\K^{n\times n}\), what is the smallest \(n\) and the largest \(\K\) such that the embedding no longer exists.
It is also worth noting that we considered embeddings of pairs of words.
One can extend the problem to the \(k\)-fold products, i.e., \(\sg(\Sigma)^k\) and \(\fg(\Sigma)^k\); see \cite{CK21}.

\subsubsection*{Acknowledgements.} We are grateful to the anonymous reviewers' careful readings and helpful suggestions that led to many improvements in this manuscript. George Kenison was partially supported by the FWO grants G0F5921N (Odysseus) and G023721N, and by the KU Leuven grant iBOF/23/064.
Igor Potapov was supported by the APEX Awards 2024
- APX\textbackslash R1\textbackslash241186.

\bibliography{BKNPS26}  %
\bibliographystyle{splncs04}

\newpage
\appendix

\section{Background Material on Imaginary Quadratic Fields} \label[appendix]{app:QuadraticFields}

We have the following characterisation of the ring of algebraic integers in the imaginary quadratic field \(\mathbb{Q}(\sqrt{-d})\)~\cite[Chapter 3]{stewart2016algebraic}.
\begin{lemma}
Suppose that  \(d\in\Z\) is square-free. The subring of algebraic integers \(\Oc_d\) in the quadratic field \(\Q(\sqrt{-d})\) has the form \(\Oc_d = \Z[\omega] \coloneqq \{x+y\omega \mid x,y\in\Z\}\) where
    \begin{equation*}
        \omega = \begin{dcases}
            \sqrt{-d} & \text{if } -d \equiv 2, 3 \pmod{4},\,\text{and} \\
            (1 + \sqrt{-d})/2 & \text{if } -d \equiv 1 \pmod{4}.
        \end{dcases}
    \end{equation*}
\end{lemma}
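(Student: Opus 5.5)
The plan is to compute the ring of integers of \(K=\Q(\sqrt{-d})\) directly, writing a general element as \(\alpha = a + b\sqrt{-d}\) with \(a,b\in\Q\). When \(b\neq 0\), \(\alpha\) is not rational, so its minimal polynomial over \(\Q\) is the quadratic
\begin{equation*}
x^2 - 2a\,x + (a^2+db^2).
\end{equation*}
An algebraic integer has a monic integer minimal polynomial (Gauss's lemma), so \(\alpha\in\Oc_d\) holds exactly when \(2a\in\Z\) and \(a^2+db^2\in\Z\). When \(b=0\) the condition is simply \(a\in\Z\), because \(\Z\) is integrally closed in \(\Q\). The task therefore reduces to solving these two conditions.

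The first step sets \(a=m/2\) with \(m\in\Z\). Multiplying the trace–norm condition by \(4\) gives \(m^2+4db^2\in 4\Z\), so \(4db^2\in\Z\). Since \(d\) is square-free, the denominator of \(b\) can only be \(1\) or \(2\); indeed, if a prime \(p\) divides the denominator of \(b\) to power \(e\), then \(p^{2e}\mid 4d\), which forces \(p=2\) and \(e=1\). So \(b=n/2\) with \(n\in\Z\), and the single remaining condition is
\begin{equation*}
m^2 + d n^2 \equiv 0 \pmod 4 .
\end{equation*}

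The second step, and the main point, is a case split on \(-d \bmod 4\), working with squares modulo \(4\).
\begin{itemize}
\item If \(-d\equiv 2\pmod 4\), the condition reads \(m^2+2n^2\equiv 0\pmod 4\) up to sign. This forces \(m\) to be even, then \(n\) to be even, so \(a,b\in\Z\).
\item If \(-d\equiv 3\pmod 4\), that is \(d\equiv 1\pmod 4\), the condition is \(m^2+n^2\equiv 0\pmod 4\). This again forces both \(m\) and \(n\) to be even.
\item If \(-d\equiv 1\pmod 4\), that is \(d\equiv 3\pmod 4\), the condition is \(m^2-n^2\equiv 0\pmod 4\). This holds exactly when \(m\equiv n\pmod 2\).
\end{itemize}
The case \(-d\equiv 0\pmod 4\) cannot occur because \(d\) is square-free.

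To conclude, in the first two cases \(\Oc_d=\Z[\sqrt{-d}]\). In the third case \(\Oc_d=\{(m+n\sqrt{-d})/2 : m\equiv n \pmod 2\}\). Writing such an element as
\begin{equation*}
\frac{m-n}{2} + n\cdot\frac{1+\sqrt{-d}}{2}
\end{equation*}
shows that \(\Oc_d=\Z[\omega]\) with \(\omega=(1+\sqrt{-d})/2\); conversely, \(\omega\) is integral since it is a root of \(x^2-x+\frac{1+d}{4}\). The only step needing care is the denominator argument for \(b\), which is where square-freeness is used; everything else is routine arithmetic of squares modulo \(4\).
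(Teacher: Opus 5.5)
Your proof is correct, and it is the standard textbook argument: the trace and norm conditions, the square-free bound on the denominator of \(b\), and the case split on \(-d \bmod 4\). The paper does not prove this lemma itself; it cites Stewart--Tall, Chapter~3, where essentially this argument appears, and your only tacit assumption \(\sqrt{-d}\notin\Q\) (i.e.\ \(d\neq -1\)) is already implicit in calling \(\Q(\sqrt{-d})\) a quadratic field.
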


The \emph{field norm} of the integer \(x+y\omega\in \mathcal{O}_d\) is given as follows:
    \begin{equation*}
        N(x+y\omega) \coloneqq \begin{dcases}
            x^2 + y^2d & \text{if \(-d\equiv 2,3 \pmod{4}\)}, \\
            x^2 + xy + y^2\left( \tfrac{1+d}{4} \right) & \text{if \(-d\equiv 1 \pmod{4}\)}.
        \end{dcases}
    \end{equation*}

\begin{lemma} \label{lem:unitintegers}
    Suppose that \(d\in\N\) is square-free.  Then the group of units \(\Oc_d^\times\) in the ring of imaginary quadratic integers \(\Oc_d\) is finite.
    Further,
     \begin{equation*}
     \Oc_d^\times = \begin{dcases}
       \{\pm 1, \pm \iu\} & \text{if} \quad d=1, \\
       \bigl\{\pm 1, \pm \tfrac{1}{2} \pm \tfrac{\sqrt{-3}}{2}, \pm \tfrac{1}{2} \mp \tfrac{\sqrt{-3}}{2} \bigr\} & \text{if} \quad d=3, \\
        \{\pm 1\} & \text{if} \quad d=2,7,11.
    \end{dcases}
    \end{equation*}
\end{lemma}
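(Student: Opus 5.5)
We need to prove Lemma on unit groups: \(\Oc_d^\times\) is finite for square-free \(d\in\N\), with the listed units for \(d=1,3\) and \(\{\pm1\}\) for \(d=2,7,11\).

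The plan is to use the characterisation of units recalled in the preliminaries: an algebraic integer \(u\in\Oc_d\) is a unit if and only if \(N(u)=1\). One direction is immediate from multiplicativity. If \(uv=1\) then \(N(u)N(v)=1\) with both factors positive integers, so \(N(u)=1\). Conversely, \(N(u)=u\overline{u}=1\) exhibits \(\overline{u}\in\Oc_d\) as an inverse. So the task reduces to solving \(N(x+y\omega)=1\) in integers \(x,y\), using the explicit norm forms from the preceding lemma.

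I will split into the two congruence cases. If \(-d\equiv 2,3\pmod 4\), then \(x^2+dy^2=1\). For \(d\ge 2\) this forces \(y=0\) and \(x=\pm1\). For \(d=1\) we also get \(x=0\), \(y=\pm1\), giving \(\{\pm1,\pm\iu\}\). This covers \(d=1,2\).

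If \(-d\equiv1\pmod4\), multiply \(x^2+xy+\tfrac{1+d}{4}y^2=1\) by \(4\) to get \((2x+y)^2+dy^2=4\). For \(d\ge 7\) (in particular \(d=7,11\)) we need \(y=0\), hence \(x=\pm1\). For \(d=3\) there are two possibilities:
\begin{itemize}
\item \(y=0\) and \(2x+y=\pm2\), giving \(\pm1\);
\item \(y=\pm1\) and \(2x+y=\pm1\), giving the four elements \(\pm\tfrac12\pm\tfrac{\sqrt{-3}}{2}\) after substituting \(x+y\omega\).
\end{itemize}
Finiteness in general follows from the same computation. For any square-free \(d\ge1\), the equation \(x^2+dy^2=1\), or \((2x+y)^2+dy^2=4\), has only finitely many integer solutions, since both summands are non-negative and bounded.

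No step is a real obstacle. The only care needed is to translate solutions \((x,y)\) back into the stated form of the units when \(d=3\), and to check that the list in the statement indeed consists of six distinct elements.
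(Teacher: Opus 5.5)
Your proof is correct. The paper gives no proof of this lemma; it presents it as standard background (the appendix cites Stewart's text), and your reduction to \(N(u)=1\) followed by solving \(x^2+dy^2=1\) or \((2x+y)^2+dy^2=4\) in integers is the standard argument. The one step you assert without justification is \(\overline{u}\in\Oc_d\); it holds because \(\overline{x+y\omega}=x+y\overline{\omega}\), where \(\overline{\omega}=-\omega\) or \(\overline{\omega}=1-\omega\) in the two cases.
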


Recall the classical Euclidean algorithm for elements \(a,b\in\Z\) allows one to choose \(q,r\in\Z\) such that \(a=qb+r\) with \(|r|<|b|/2\).  The following definition abstracts the contraction factor of \(1/2\) on the absolute value of the remainder to the setting of algebraic integers (although we limit ourselves to quadratic integer rings).
Given a field \(\Q(\sqrt{-d})\), the corresponding ring of integers \(\Oc_d\), and field norm \(N\) as above, we define
    the \emph{Euclidean minimum}~\cite{lemmermeyer1995euclidean} of \(\Q(\sqrt{-d})\) by
    \begin{equation*}
        \kappa(d) \coloneqq \sup_{\alpha\in\Q(\sqrt{\minus d})} \inf_{\beta\in\Oc_d} | N(\alpha-\beta)|.
    \end{equation*}
The connection to the Euclidean algorithm is made clear by the equivalent definition
\begin{equation*}
    \kappa(d) = \inf\{ \kappa>0 \mid \forall\alpha,\beta\in\Oc_d\!\setminus\!\{0\}, \, \exists q\in \Oc_d \text{ such that } N(\alpha- q\beta)<\kappa N(\beta) \}.
\end{equation*}

Surveys on Euclidean minima in the general algebraic setting are given in~\cite{bayerfluckiger2006euclidean,lemmermeyer1995euclidean}.
For \(\Oc_d\) to be a Euclidean domain, it is sufficient that \(\kappa(d)<1\), cf.~\cite[\textsection 2.3--2.4]{conway2003quaternions} and \cite[Chapter 8]{hatcher2022numbers}).
The minima for the Euclidean domains we discuss are given below.
\begin{proposition}[{\cite[Proposition 4.2]{lemmermeyer1995euclidean}}] %
\label{prop:euclideanminima}
    For \(d\) and \(\Oc_d\) as above,
    \begin{equation*}
        \kappa(d) = \begin{dcases}
            \frac{d+1}{4} & \text{ if } d=1,2, \text{ and} \\
            \frac{(d+1)^2}{16 d} & \text{ if } d=3,7,11.
        \end{dcases}
    \end{equation*}
\end{proposition}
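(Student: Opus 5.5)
The plan is to identify \(\kappa(d)\) with the squared covering radius of the lattice \(\Oc_d\subset\C\). For imaginary quadratic fields the paper notes that \(N(\alpha)=|\alpha|^2\), and \(N\) extends to all of \(\Q(\sqrt{-d})\) by the same formula. Hence \(\inf_{\beta\in\Oc_d}N(\alpha-\beta)\) is the squared Euclidean distance from \(\alpha\) to the lattice \(\Oc_d=\Z+\Z\omega\), where \(\omega\) is as in the lemma describing \(\Oc_d\).

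I will compute the supremum over all \(\alpha\in\C\) and then check that it is attained at a point of \(\Q(\sqrt{-d})\). This shows the supremum over the field equals the supremum over \(\C\). The infimum over \(\beta\) is always attained, since the lattice is discrete.

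Next I will reduce to a fundamental domain. By translation invariance it suffices to take \(\alpha\) in a fundamental region, triangulated by lattice triangles. For a point inside a triangle whose angles are all at most \(\pi/2\), the distance to the nearest vertex is maximised at the circumcentre, with value the circumradius \(R\). To see this, note that the three closed discs of radius \(R\) about the vertices cover a non-obtuse triangle: the perpendicular bisectors cut it into three regions, each closer to one vertex and contained in that vertex's disc. The circumcentre is at distance exactly \(R\) from every vertex, so the supremum is \(R^2\).

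The first case is \(d=1,2\), where \(\omega=\sqrt{-d}\). The lattice is rectangular with sides \(1\) and \(\sqrt d\), so it splits into right triangles \(0,1,\sqrt{-d}\). The circumcentre of such a triangle is the midpoint of the hypotenuse, \((1+\sqrt{-d})/2\in\Q(\sqrt{-d})\). This gives \(R^2=(1+d)/4\).

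The second case is \(d=3,7,11\), where \(\omega=(1+\sqrt{-d})/2\). The lattice is triangulated by the isosceles triangles \(0,1,\omega\). These have base \(1\), legs \(s\) with \(s^2=|\omega|^2=|\omega-1|^2=(1+d)/4\), and height \(h=\sqrt d/2\). The triangle is acute exactly when \(1<2s^2\), i.e.\ \(d>1\), which holds here. The isosceles circumradius formula \(R=s^2/(2h)\) then gives \(R=(1+d)/(4\sqrt d)\), so \(R^2=(d+1)^2/(16d)\).

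To see the supremum is attained in the field, I will compute the circumcentre explicitly as \(\tfrac12+\iu\,y\). Here \(y=h-R=(d-1)/(4\sqrt d)=\tfrac{d-1}{4d}\sqrt d\), so the circumcentre equals \(\tfrac12+\tfrac{d-1}{4d}\sqrt{-d}\in\Q(\sqrt{-d})\).

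The main point needing care is the geometric step: the farthest point from the lattice is the circumcentre of a triangle in a suitable triangulation. The triangulation must consist of non-obtuse triangles; for \(d\equiv 3\pmod 4\) this is exactly the acuteness check \(d>1\). I would present the covering-by-discs argument carefully, and the remaining computations are routine.
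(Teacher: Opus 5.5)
The paper gives no argument for this proposition. It is quoted from Lemmermeyer's survey and used only as a black box, through the values in \cref{table:matrixentries}. Your route is therefore a genuinely different, self-contained geometric proof: since \(N(x)=|x|^2\), \(\kappa(d)\) is the squared covering radius of the lattice \(\Z+\Z\omega\subset\C\), computed via a triangulation by non-obtuse lattice triangles.

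The computations check out, giving \(R^2=\tfrac12,\tfrac34,\tfrac13,\tfrac47,\tfrac{9}{11}\) for \(d=1,2,3,7,11\). The acuteness criterion \(d>1\) is right, since the base angles of an isosceles triangle are automatically acute. The circumcentre \(\tfrac12+\tfrac{d-1}{4d}\sqrt{-d}\) is also correct.

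To finish the covering step cleanly, use the following fact about a non-obtuse triangle. The set of points nearest a vertex \(A\) is the convex quadrilateral with corners \(A\), the midpoints of the two edges at \(A\), and the circumcentre. All four corners lie within \(R\) of \(A\), so by convexity the whole region lies in the disc of radius \(R\) about \(A\).

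There is one unjustified step, in the lower bound. The circumcentre \(O\) is at distance \(R\) from the three vertices of its own triangle, but that alone does not give \(\inf_{\beta\in\Oc_d}N(O-\beta)=R^2\). You must also exclude a lattice point lying strictly inside the circumcircle (the empty-circle property of a Delaunay triangulation). This holds here and is a short check:

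\begin{itemize}
\item For \(d=1,2\): \(N\bigl(O-(m+n\sqrt{-d})\bigr)=(m-\tfrac12)^2+d(n-\tfrac12)^2\ge\tfrac{1+d}{4}\).
\item For \(d=3,7,11\): write \(O=\tfrac12+\iu y\) and \(h=\sqrt d/2\). Then \(0\le y<h\) and \(R^2=\tfrac14+y^2=(h-y)^2\), and
\(N\bigl(O-(m+n\omega)\bigr)=\bigl(m+\tfrac{n-1}{2}\bigr)^2+(nh-y)^2\).
For \(n=0\) this is at least \(\tfrac14+y^2\). For \(n=1\) it is at least \((h-y)^2\). For \(n\ge2\) or \(n\le-1\) it is also at least \((h-y)^2\), because then \(|nh-y|\ge h-y\).
\end{itemize}

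Alternatively, you can cite the fact that a triangulation with all angles at most \(\pi/2\) is Delaunay. With this inserted, your proof is complete.

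The check that the supremum is attained at a field point could also be replaced by a density argument. \(\Q(\sqrt{-d})\) is dense in \(\C\) and the distance to the lattice is continuous, so the two suprema agree.
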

The minima for the five Euclidean imaginary quadratic integer rings are given in \cref{table:matrixentries}.

\setlength\extrarowheight{2pt}
\begin{table}
\begin{centering}
\rowcolors{1}{}{gray!20!white}
\begin{tabular}{c | c | c | c | p{7cm} }
 \(d\) & basis element \(\omega\)  & \(\kappa(d)\)  & \(\tfrac{1}{1-\kappa(d)}\)  & \(\bigl\{x\in \Oc_d : N(x) < \tfrac{1}{1-\kappa(d)} \bigr\}\) \\[0.2em] \hline
 1 & \(\iu\)   & \(\tfrac{1}{2}\)  & 2 & \(\{0, \pm 1, \pm \iu\}\)  \\[0.25em]
 2 & \(\sqrt{\shortminus 2}\)   & \(\tfrac{3}{4}\)  & 4  & \(\{0, \pm 1, \pm \sqrt{-2}, -1 \pm \sqrt{-2}, 1 \pm \sqrt{-2}\}\)  \\[0.25em]
 3 & \(\tfrac{1}{2} + \tfrac{\sqrt{-3}}{2}\)  & \(\tfrac{1}{3}\) & \(\tfrac{3}{2}\) & \(\{0, \pm 1, \pm\omega, \pm\overline{\omega} \}\)  \\[0.25em]
  7 & \(\tfrac{1}{2} + \tfrac{\sqrt{-7}}{2}\)  & \(\tfrac{4}{7}\) & \(\tfrac{7}{3}\) & \(\{0, \pm 1, \pm\omega, \pm\overline{\omega}\}\) \\[0.25em]
  11 & \(\tfrac{1}{2} + \tfrac{\sqrt{-11}}{2}\) & \(\tfrac{9}{11}\) & \(\tfrac{11}{2}\) & \(\{0, \pm 1, \pm2, \pm\omega, -1\pm\omega, 1\pm\omega, \pm(\omega -2)\}\) \\[0.25em]
\end{tabular}
\vspace{0.5em}
\caption{Data for the integer rings \(\Oc_d\) relevant to the Euclidean Bianchi groups \(\PSL(2,\Oc_d)\). \label{table:matrixentries}} %
\end{centering}
\end{table}
\setlength\extrarowheight{0pt}

\section{Word Representation Procedure for the Euclidean Bianchi Groups}
\label[appendix]{app:wordBianchi}

Recall the main result in \cref{sec:wordrep}.
\wordBianchi*

We previously showed that \cref{thm:wordPicard} holds for the Picard group \(\PSL(2,\Z[\iu])\).
In this section, we will complete the proof of \cref{thm:wordPicard} by establishing the word representation procedures for each of the remaining Euclidean Bianchi groups.
The reader will observe that, \textit{mutatis mutandis}, the structure and machinery is analogous to that employed in the proof for the Picard group.

\begin{proof}[Proof of \cref{thm:wordPicard} for \(\PSL(2,\Oc_d)\) with \(d=2,3,7,11\)]
    We split the proof of \cref{thm:wordPicard} into two parts: the construction of the word representation and the polynomial runtime.

\textbf{Construction of the word representation.}
Suppose that \(M = \begin{psmallmatrix} \alpha & \beta \\ \gamma & \delta \end{psmallmatrix} \in \PSL(2,\Oc_d)\).  
Our first step is to construct an element \(H\in\PSL(2,\Oc_d)\) such that \(MH\) is upper-triangular.
In the case that \(\gamma=0\), we choose \(H = \textrm{Id}_2\).

We continue under the assumption that \(\gamma \neq 0\).
We claim that there is an \(H_1\in\PSL(2,\Oc_d)\)  such that \(MH_1 = \begin{psmallmatrix} \alpha_1 & \beta_1 \\ \gamma_1 & \delta_1\end{psmallmatrix}\) where \(N(\gamma_1) < N(\gamma)\).
(Here \(N\) is the field norm on \(\Q(\sqrt{-d})\).)
Since \(\Oc_d\) equipped with the field norm \(N\) is a Euclidean domain,
we can write 
\(\delta = -\theta_1\gamma + \gamma_1\) with \(\theta,\gamma_1\in\Oc_d\), from which
    \(N(\gamma_1) = N(\theta_1\gamma+\delta) < N(\gamma)\) follows.
Let us write \(\theta_1= -p_1 -q_1\omega\in\Oc_d\) in terms of the integral basis \(\{1,\omega\}\), then
\begin{multline*}
    M U^{-q_1} T^{-p_1} A = 
    \begin{pmatrix}
        \alpha & \beta \\ \gamma & \delta
    \end{pmatrix}
    \begin{pmatrix}
        1 & \omega \\ 0 & 1
    \end{pmatrix}^{-q_1}
    \begin{pmatrix}
        1 & 1 \\ 0 & 1
    \end{pmatrix}^{-p_1}
    \begin{pmatrix}
        0 & -1 \\ 1 & \phantom{-}0
    \end{pmatrix}
     \\
     =
    \begin{pmatrix}
        \alpha\ & \theta_1 \alpha  + \beta \\
        \gamma\ & \theta_1 \gamma  + \delta \\
    \end{pmatrix}
    \begin{pmatrix}
        0 & -1 \\ 1 & \phantom{-}0
    \end{pmatrix}
    =
    \begin{pmatrix}
       \theta_1 \alpha  + \beta\ & -\alpha \\
       \theta_1 \gamma  + \delta\ & -\gamma
    \end{pmatrix}
    \eqqcolon
    \begin{pmatrix} \alpha_1 & \beta_1 \\ \gamma_1 & \delta_1\end{pmatrix}.
\end{multline*}
The constructed element \(H_1\coloneqq  U^{-p_1} T^{-q_1} A\) has the required property that \(N(\gamma_1) < N(\gamma)\). %

We loop the above construction in order to generate a sequence of matrices of the form \(M H_1 \cdots H_\ell\).
Let \(\gamma_\ell\) be the bottom-left entry of the matrix \(M H_1\cdots H_\ell\).
We repeat this loop until we reach a matrix \(M H_1 \cdots H_k\) that satisfies the condition \(\gamma_k=0\).

The following observations guarantee that the above loop both terminates and does so correctly.
First, \(N(\gamma_\ell) = |\gamma_\ell|^2\in\Z_{\ge 0}\) for each \(\ell\) since \(\gamma_\ell\in\Oc_d\).
Second, the sequence of these norms is strictly decreasing (and so \(N(\gamma_{\ell+1}) \le  N(\gamma_{\ell}) - 1\)
for \(1\le \ell \le k-1\)).
Hence there exists \(k\in\N\) such that \(N(\gamma_k)=0\), from which we deduce that \(\gamma_k=0\).
Thus we have constructed an element \(H_1\cdots H_k \eqqcolon H\in\PSL(2,\Oc_d)\) such that \(MH\) is upper-triangular. %
This ends our first step, which 
has reduced the representation problem to that of representing an upper-triangular element.

For our second step, 
consider
    \begin{equation*}
        MH= \begin{pmatrix} \alpha_k & \beta_k \\ 0 & \delta_k \end{pmatrix} \eqqcolon %
\begin{pmatrix} \rho & \sigma \\ 0 & \tau \end{pmatrix}
    \end{equation*} with \(\rho,\sigma,\tau\in \Oc_d\).
Since \(1=\det(MH)=\rho \tau\), we deduce that \(\rho,\tau\in \Oc_d^\times\) and, moreover,  \(\overline{\tau} = \tau^{-1} = \rho\).
The remainder of this step is broken into two cases that depend on the multiplicative group of units \(\Oc_d^\times\) (see \cref{lem:unitintegers}).
\begin{itemize}
    \item Suppose that \(d=3\). Then for any \(\rho\in \Oc_d^\times = \{\pm 1, \pm \omega, \pm \omega^2\}\) where \(\omega = -\tfrac{1}{2} + \tfrac{\sqrt{3}\iu}{2}\), we pick \(\tau = \overline{\rho}\).
    From the possible pairings, we deduce that
    \begin{equation*}
        MH =   
    \begin{pmatrix}
        \omega^2 & 0 \\ 0 & \omega
    \end{pmatrix}^{\epsilon} 
    \begin{pmatrix}
        1 & \sigma' \\ 0 & 1
    \end{pmatrix}
    =
    \begin{pmatrix}
        \omega^2 & 0 \\ 0 & \omega
    \end{pmatrix}^{\epsilon} 
    \begin{pmatrix}
        1 & 1 \\ 0 & 1
    \end{pmatrix}^{p_0}
       \begin{pmatrix}
        1 & \omega \\ 0 & 1
    \end{pmatrix}^{q_0}
=
      L^\epsilon T^{p_0} U^{q_0}
    \end{equation*}
    where \(\epsilon\in\{0,1,2\}\) and \(\sigma' = p_0+q_0\omega\in\Z[\omega]=\Oc_3\) is an associate of \(\sigma\).

    \item Suppose that \(d\in\{2,7,11\}\).  
    Then \(\Oc_d^\times = \{\pm 1\}\). 
    In this case, our only options are \(\rho= \tau= \pm 1\). %
    Thus 
        \begin{equation*}
            MH =   %
    \begin{pmatrix}
        1 & \sigma' \\ 0 & 1
    \end{pmatrix}
    =
    \begin{pmatrix}
        1 & 1 \\ 0 & 1
    \end{pmatrix}^{p_0}
    \begin{pmatrix}
        1 & \omega \\ 0 & 1
    \end{pmatrix}^{q_0}
=
      T^{p_0} U^{q_0}
        \end{equation*}
where \(\pm\sigma = \sigma' = p_0 + q_0\omega\in\Oc_d\). %
\end{itemize}
Thus ends the second step.

Taken together, the above work constructs representations for elements \(H\) and \(T^{p_0} U^{q_0}\) (or \(L^\epsilon T^{p_0} U^{q_0}\)) for which \(M = T^{p_0} U^{q_0} H^{-1}\) (or \(M= L^\epsilon T^{p_0} U^{q_0} H^{-1}\)). 
Using these products, we can output a word representation of the desired form for a given \(M\).

\textbf{Polynomial runtime.}
We shall employ the same notations \(\|\wc \|\) for the matrix norm and \(M_{\ell}\) for matrix products as before. %
Recall that \(\kappa(d)\) is the Euclidean minima on \(\mathbb{Q}(\sqrt{-d})\) (see \cref{prop:euclideanminima} and \cref{table:matrixentries}).
For \(\ell<k\), repeated application of the Euclidean minima gives
\begin{multline*}
   1\le  |\gamma_\ell|^2 = N(\gamma_\ell) < \kappa(d) N(\gamma_{\ell-1}) < \kappa(d)^2 N(\gamma_{\ell-2}) < \cdots \\ < \kappa(d)^{\ell-1} N(\gamma_1) < \kappa(d)^\ell N(\gamma) = \kappa(d)^\ell |\gamma|^2 = \kappa(d)^\ell \|M\|.
\end{multline*}
Taking logarithms, we find the upper bound \(-\log_{\kappa(d)} \|M\| > \ell\).
Arguing the contrapositive, we obtain the upper bound \(-\log_{\kappa(d)} \|M\| + 1 > k\) on termination of the first step.

The determinant condition on \(M_\ell\) tells us that 
\begin{multline*}
    \alpha_\ell \delta_{\ell} - \beta_\ell\gamma_\ell =-(\theta_\ell \alpha_{\ell-1} + \beta_{\ell-1})\gamma_{\ell-1}  + (\theta_\ell \gamma_{\ell-1} + \delta_{\ell-1})\alpha_{\ell-1}  \\
    =  -(\theta_\ell \alpha_{\ell-1} + \beta_{\ell-1})\gamma_{\ell-1}  + \gamma_{\ell}\alpha_{\ell-1} =1.
\end{multline*}
For \(\ell-1<k\), we have that \(\gamma_{\ell-1}^{-1}\) is well-defined and so
    \begin{multline} \label{eq:normboundEB}
        N(\alpha_\ell) = N(\theta_\ell\alpha_{\ell-1} + \beta_\ell) = N\!\left(\frac{\gamma_\ell\alpha_{\ell-1} - 1}{\gamma_{\ell-1}}\right) \\ \le \frac{N(\gamma_\ell)N(\alpha_{\ell-1})}{N(\gamma_{\ell-1})} + \frac{1}{N(\gamma_{\ell-1})} \le \kappa(d)N(\alpha_{\ell-1}) + 1.
    \end{multline}
Here the rightmost inequality of \eqref{eq:normboundEB} follows from the Euclidean division \(-\delta_{\ell-1}= \theta_{\ell} \gamma_{\ell-1} + \gamma_{\ell}\), which gives the bound \( N(\gamma_\ell) < \kappa(d) N(\gamma_{\ell-1})\).
\begin{nclam} \label{claim:inequalityPicard2}
For each \(d\in \{2,3,7,11\}\), there are no matrices \(M_{\ell-1}\) and \(M_\ell\) (as above) for which \(\|M_{\ell}\| > \|M\|_{\ell-1}\).
\end{nclam}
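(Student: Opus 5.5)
We follow the proof of \cref{claim:inequalityPicard} for the Picard group, replacing the contraction factor \(\tfrac12\) by \(\kappa(d)\). By \eqref{eq:update}, the entries \(\beta_\ell=-\alpha_{\ell-1}\) and \(\delta_\ell=-\gamma_{\ell-1}\) satisfy \(|\beta_\ell|^2,|\delta_\ell|^2\le\|M_{\ell-1}\|\). The division also gives \(N(\gamma_\ell)<\kappa(d)N(\gamma_{\ell-1})\le \|M_{\ell-1}\|\). So \(\|M_\ell\|>\|M_{\ell-1}\|\) can only happen if \(N(\alpha_\ell)>\|M_{\ell-1}\|\), and we assume this for a contradiction. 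Combining it with \eqref{eq:normboundEB} and \(N(\alpha_{\ell-1})\le\|M_{\ell-1}\|\) gives
\[
\|M_{\ell-1}\|<\kappa(d)\|M_{\ell-1}\|+1, \qquad\text{i.e.}\qquad \|M_{\ell-1}\|<\tfrac{1}{1-\kappa(d)}.
\]
Every entry of \(M_{\ell-1}\) therefore lies in the finite set \(\{x\in\Oc_d : N(x)<\tfrac{1}{1-\kappa(d)}\}\) listed in \cref{table:matrixentries}. For \(d=3\) this set is \(\{0\}\cup\Oc_3^\times\). For \(d=7\) it is \(\{0,\pm1,\pm\omega,\pm\overline\omega\}\), with norms \(0,1,2\). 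For \(d=2\) the entries have norm at most \(3\). For \(d=11\) they have norm at most \(5\).

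The reduction to finitely many matrices needs two more facts. Since \(\gamma_{\ell-1}\neq0\) (we are not yet at termination), the bound \(N(\gamma_\ell)<\kappa(d)N(\gamma_{\ell-1})\) and the exact identity \(\alpha_\ell=(\gamma_\ell\alpha_{\ell-1}-1)/\gamma_{\ell-1}\) let us sharpen \eqref{eq:normboundEB}. By the triangle inequality for \(|\cdot|\),
\[
|\alpha_\ell|\le \frac{|\gamma_\ell||\alpha_{\ell-1}|+1}{|\gamma_{\ell-1}|},
\]
and this is the bound we test in each case. The goal is to show that no determinant-one matrix \(M_{\ell-1}\) with such small entries admits a remainder \(\gamma_\ell\) giving \(N(\alpha_\ell)>\|M_{\ell-1}\|\).

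The case split mirrors the Picard argument and splits on which entries of \(M_{\ell-1}\) vanish.
\begin{itemize}
\item If \(\alpha_{\ell-1}=0\), then \(\alpha_\ell=\beta_{\ell-1}\) up to sign, by the determinant identity, so \(N(\alpha_\ell)\le\|M_{\ell-1}\|\).
\item If \(\gamma_\ell=0\), then \(\alpha_\ell=-1/\gamma_{\ell-1}\) is a unit, so again \(N(\alpha_\ell)\le\|M_{\ell-1}\|\).
\item If \(N(\gamma_{\ell-1})\) is minimal, namely \(1\), then \(N(\gamma_\ell)<\kappa(d)<1\) forces \(\gamma_\ell=0\), which is the previous case.
\end{itemize}
In the remaining cases we have \(\alpha_{\ell-1}\ne0\), \(\gamma_\ell\ne0\) and \(N(\gamma_{\ell-1})\ge2\). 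Then \(1\le N(\gamma_\ell)<\kappa(d)N(\gamma_{\ell-1})\) bounds \(N(\gamma_{\ell-1})\) from below. For \(d=3\) we have \(\kappa(3)\cdot1<1\), so \(N(\gamma_{\ell-1})\ge2\) cannot occur at all and the case \(d=3\) closes immediately, as in the Picard proof. For \(d=7\) we get \(N(\gamma_{\ell-1})=2\) and \(N(\gamma_\ell)=1\). The sharpened bound then gives \(|\alpha_\ell|\le(\sqrt2+1)/\sqrt2\). For \(d=2\) and \(d=11\) the same estimate is applied, but there are more combinations of norms to check.

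The main obstacle is \(d=11\), and to a lesser extent \(d=2\): there \(\tfrac1{1-\kappa(d)}\) is large, so the triangle-inequality bound need not beat \(\|M_{\ell-1}\|\). In those cases we fall back on exact arithmetic. For each admissible \(\gamma_{\ell-1}\) and unit \(\gamma_\ell\) (up to the sign and conjugation symmetries) we list the \(\alpha_{\ell-1}\) in the table for which \(\gamma_{\ell-1}\mid \gamma_\ell\alpha_{\ell-1}-1\); this divisibility is needed because \(\alpha_\ell\in\Oc_d\). We then compute \(N(\alpha_\ell)\) exactly. The congruence condition modulo \(\gamma_{\ell-1}\) is very restrictive, and the resulting \(\alpha_\ell\) should have norm at most \(\max(N(\alpha_{\ell-1}),N(\gamma_{\ell-1}))\le\|M_{\ell-1}\|\), which is the contradiction. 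If any case escapes, we would verify it by a short finite computation over the table entries.
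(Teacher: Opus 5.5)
There are two genuine gaps.

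\textbf{Gap 1: the cases \(d=2\) and \(d=11\) are never proved.} You assert that \(\alpha_\ell\) ``should'' satisfy \(N(\alpha_\ell)\le\max(N(\alpha_{\ell-1}),N(\gamma_{\ell-1}))\), and otherwise defer to ``a short finite computation''. That computation is what the paper actually does (Algorithm~\ref{alg:cap}), but you do not carry it out. Moreover, the enumeration you sketch runs only over \emph{unit} remainders \(\gamma_\ell\). For \(d=11\) this omits the critical cases:
\begin{itemize}
\item dividing by \(\gamma_{\ell-1}=2\) can leave \(\gamma_\ell\in\{\pm\omega,\pm(1-\omega)\}\), of norm \(3\le\kappa(11)\cdot4\);
\item dividing by \(\gamma_{\ell-1}=1+\omega\) can leave \(\pm(1-\omega)\), of norm \(3\le\kappa(11)\cdot5\).
\end{itemize}
These have the largest ratio \(N(\gamma_\ell)/N(\gamma_{\ell-1})\), hence the weakest control on \(\alpha_\ell\). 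Your own triangle-inequality bound plus integrality of norms does handle them. For example, \(\gamma_{\ell-1}=2\), \(N(\gamma_\ell)=3\) gives \(N(\alpha_\ell)\le\lfloor(\sqrt{3N(\alpha_{\ell-1})}+1)^2/4\rfloor\le\max(N(\alpha_{\ell-1}),4)\). So your worry that this bound ``need not beat'' \(\|M_{\ell-1}\|\) is unfounded, but the cases still have to be listed and checked.

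\textbf{Gap 2: the reduction step rests on a false inequality.} Your reduction to \(\|M_{\ell-1}\|<(1-\kappa(d))^{-1}\) uses \eqref{eq:normboundEB}, and that inequality is false; the paper's own reduction uses it too. Its first step is \(|x-1|^2\le|x|^2+1\) for \(x=\gamma_\ell\alpha_{\ell-1}\), which fails whenever \(\mathrm{Re}\,x<0\).

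For a concrete failure with \(d=11\), take \(M_{\ell-1}=\begin{psmallmatrix}-3+\omega&-2-\omega\\2&\omega\end{psmallmatrix}\) with the admissible nearest quotient \(\theta=0\). Its remainder is \(\gamma_\ell=\omega\), and it ties with \(\theta=-\omega\). Then \(\alpha_\ell=-2-\omega\) and \(N(\alpha_\ell)=9>\kappa(11)\cdot9+1\). Here \(\|M_\ell\|=\|M_{\ell-1}\|=9>\tfrac{11}{2}\), so this tight case is not even in the finite set you (or Algorithm~\ref{alg:cap}) would examine.

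The bound you call a sharpening, \(|\alpha_\ell|\le(|\gamma_\ell||\alpha_{\ell-1}|+1)/|\gamma_{\ell-1}|\), is the correct one, but it is weaker, not sharper. From it, \(\|M_\ell\|>\|M_{\ell-1}\|\) only forces
\(N(\gamma_{\ell-1})<(1-\sqrt{\kappa(d)})^{-1}\) and \(|\alpha_{\ell-1}||\gamma_{\ell-1}|(1-\sqrt{\kappa(d)})<1\),
where \((1-\sqrt{\kappa(d)})^{-1}\approx 7.46,\ 2.37,\ 4.10,\ 10.47\) for \(d=2,3,7,11\).

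Your case analysis must be rerun on this larger region:
\begin{itemize}
\item \(d=3\) still closes at once, since \(N(\gamma_{\ell-1})<2.37\) forces \(\gamma_{\ell-1}\) to be a unit.
\item For \(d=7\) you must add \(N(\gamma_{\ell-1})=4\) and \(N(\alpha_{\ell-1})=4\).
\item For \(d=2\) and \(d=11\) you need divisors \(\gamma_{\ell-1}\) up to norm \(6\) and \(9\) respectively, and \(\alpha_{\ell-1}\) of norm up to \(12\) when \(\gamma_{\ell-1}=2\).
\end{itemize}
Using the triangle inequality, integrality of norms, and the minimal remainders modulo each such \(\gamma_{\ell-1}\), this is a finite check that does go through. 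As written, however, only your \(d=3\) argument survives the corrected reduction.
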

The proof of \cref{claim:inequalityPicard2} is given in \cref{appendix:proofofclaim}.
All that remains is to bound  the sizes of the integer exponents in the word representation.
For \(\ell\le k\), we give the following bounds on the quotients \(\theta_\ell \coloneqq -p_\ell - q_\ell\omega\), %
\begin{multline*}
    |p_\ell + q_\ell\omega|^2 = N(\theta_\ell) = N\!\left( \frac{\delta_{\ell-1}+\gamma_\ell}{\gamma_{\ell-1}} \right) \\ \le N(\delta_{l-1}) + \frac{N(\gamma_\ell)}{N(\gamma_{\ell-1})} \le \|M_{\ell-1}\| + \kappa(d) \le \|M\| + \kappa(d).
\end{multline*}
Since \(N(\theta_\ell), \|M\|\in\Z\), we have \(N(\theta_\ell)\le \|M\|\), as desired.
Observe that the rightmost inequality follows from \cref{claim:inequalityPicard2}.
We consider the matrix \(MH\) in order to bound \(|p_0 + q_0 \omega|^2\). 
Once again, it is clear that
    \begin{equation*}
        |p_0+q_0\omega|^2 = |\sigma|^2 \le \max_{1\le i,j \le 2} |(MH)_{ij}|^2 = \|M_k\| \le \|M\|
    \end{equation*}
The above observations bound the number of iterations in the initial looping procedure as well as the sizes of the exponents in the word representation. Taken together, with standard results on the division algorithm leads us to the stated polynomial runtime. \qed
\end{proof}

\subsection{Proof of \texorpdfstring{\cref{claim:inequalityPicard2}}{the Claim} } 
\label[appendix]{appendix:proofofclaim}

In this section, we shall prove the pending cases for \cref{claim:inequalityPicard2}. 
We take this opportunity to sketch a proof outline.
For each \(d=2,3,7,11\), consider ordered pairs of matrices \((M, M')\) in \(\PSL(2,\Oc_d)\) given by
\(M = \begin{psmallmatrix} \alpha & \beta \\ \gamma & \delta \end{psmallmatrix}\) and
\(%
    M' =
    \begin{psmallmatrix}
    \theta \alpha + \beta & -\alpha \\ 
    \theta \gamma + \delta & - \gamma
    \end{psmallmatrix}
\)
(as in \eqref{eq:update})
where \(-\theta,r\in\Oc_d\) are the quotient and remainder in \(\delta = -\theta \cdot \gamma + r\), respectively.
We need to show that each such pair \((M, M')\) satisfies \(\|M\|<\|M'\|\).

The proof proceeds by contradiction, any such pair \((M,M')\) requires that
    \begin{equation} \label{eq:matrixnormbound}
        \|M\| = \max_{1\le i,j\le 2}|M_{ij}|^2 \le \frac{1}{1-\kappa(d)}
    \end{equation}
(where \(\kappa(d)\) is the Euclidean minima of \(\Oc_d\)).
Thus, to prove the claim, it is sufficient to compute and compare the norms \(\|M\|\) and \(\|M'\|\) for all such pairs \((M,M')\).
The number of matrices \(M\) that satisfy \eqref{eq:matrixnormbound} 
is too large to deal with by hand.
Thus, with computer assistance\footnote{Our computations were performed in SageMath~\cite{sage}.}, we perform an exhaustive search for said matrices and establish that \(\|M\| > \|M'\|\) holds for each candidate.
Thus ends our sketch proof.

\begin{proof}[Proof of \cref{claim:inequalityPicard2}]
As in the Gaussian integer case of \cref{thm:wordPicard}, the matrix norm satisfies \(\|M_{\ell}\| > \|M\|_{\ell-1}\) only if \(|\alpha_\ell|^2 > |\alpha_{\ell-1}|^2\). Assume, for a contradiction, that \(\|M_{\ell}\| > \|M_{\ell-1}\|\) then, by \eqref{eq:normboundEB},
    \begin{equation*}
        \|M_{\ell-1}\| < \|M_\ell\| = N(\alpha_\ell) \le \kappa(d)N(\alpha_{\ell-1}) + 1 < \kappa(d) \|M_{\ell-1}\| + 1,
    \end{equation*}
and so we deduce that \((1-\kappa(d))\|M\|_{\ell-1} < 1\). The possible entries for a matrix in \(\PSL(2,\Oc_d)\) are listed in \cref{table:matrixentries}.
Our proof continues with an exhaustive search for pairs \((M_{\ell-1}, M_\ell)\) of matrices that falsify the claim (i.e., pairs with \(\|M_{\ell-1}\| < \|M_\ell\|\)).
We have implemented this search in SageMath~\cite{sage}; our code is available at~\cite{kenison2026SageEuclideanBianchi}.
Pseudocode for this search procedure is given in \cref{alg:cap}.
In \cref{alg:cap}, the function \(\textsc{Quotient}\colon \Oc_d \times \Oc_d \to \Oc_d\) takes as input an ordered pair of algebraic integers and returns their quotient (in the ring \(\Oc_d\)).
In other words, for the ordered pair \((a,b)\in\Oc_d\times \Oc_d\), \textsc{Quotient} returns the nearest integer (in \(\Oc_d\)) to \(a/b\).

SageMath does not automatically recognise that the integer rings \(\Oc_d\) with \(d=1,2,3,7,11\) are Euclidean domains.
Thus for integers \(a,b\in\Oc_d\), the SageMath command \verb|a.quo_rem(b)[0]| outputs the result of the division \(a/b\in\Q(\sqrt{-d})\) rather than the nearest integer in \(\Oc_d\) (the output of \textsc{Quotient}).
We circumvent this obstacle by implementing a straightforward procedure for finding the nearest algebraic integer in \(\Oc_d\)~\cite{kenison2026SageEuclideanBianchi}.

\begin{algorithm}
\setstretch{1.35}
\begin{algorithmic}
\Require {\(d\in\{1,2,3,7,11\}\)}
\State \texttt{matrices} \(\leftarrow\) empty list 
\State \(S\leftarrow \bigl\{x\in \Oc_d : N(x) < \tfrac{1}{1-\kappa(d)}\bigr\}\); \Comment{set of matrix entries (see \cref{table:matrixentries})}
\For{ordered tuples \((\alpha,\beta,\gamma, \delta)\), with \(\alpha,\beta,\gamma,\delta\in S\), \(\gamma\neq 0\), and \(\alpha \delta - \beta \gamma = 1\)} 
\State \(\bigl( M, \|M\|\bigr)  \leftarrow \left( \begin{psmallmatrix} \alpha & \beta \\ \gamma & \delta \end{psmallmatrix} , \max\{|\alpha|^2, |\beta|^2, |\gamma|^2, |\delta|^2\}\right)\); \Comment{matrix and its norm}
\State \(\theta \leftarrow \textsc{Quotient}(-\delta, \gamma)\); \Comment{output quotient \(-\theta\in\Oc_d\)} %
\State \((\alpha,\beta) \leftarrow (\theta \cdot \alpha + \beta, -\alpha)\); 
\State \((\gamma, \delta) \leftarrow (\theta \cdot \gamma + \delta, -\gamma)\);
\State \(\bigl( M', \|M'\| \bigr)  \leftarrow \left(\begin{psmallmatrix} \alpha & \beta \\ \gamma & \delta \end{psmallmatrix}, \max\{|\alpha|^2, |\beta|^2, |\gamma|^2, |\delta|^2\} \right)  \); \Comment{update matrix and its norm}
\If{\(\|M\| < \|M'\|\)}
\State add \((M, M')\) to \texttt{matrices} %
\EndIf
\EndFor
\If{\texttt{matrices} is an empty list}
\State \Return {\lq\lq Claim holds for matrices in \(\PSL(2,\Oc_d)\)\rq\rq}
\Else
\State \Return {\lq\lq Claim does not hold for the pairs} \texttt{matrices}\rq\rq
\EndIf
\caption{For \(d=1\), this procedure determines whether \cref{claim:inequalityPicard} holds.  For \(d=2,3,7,11\), this procedure determines whether \cref{claim:inequalityPicard2} holds.\label{alg:cap}}
\end{algorithmic}
\end{algorithm}
From our implementation and testing, we confirm that \cref{claim:inequalityPicard,claim:inequalityPicard2} hold for each \(d\in\{1,2,3,7,11\}\). \qed
\end{proof}

\section{Omitted Embeddings Proofs}\label[appendix]{app:nonexistence}
\embeddingexistence*
\begin{proof}
Consider mappings \(\varphi_1,\varphi_2,\varphi_3,\varphi_4\) defined as follows:
 \begin{itemize}
    \item Let \(\varphi_1:\sg(\Sigma_1)\times\sg(\Sigma_1)\to \ut(2,\N)\) be defined by
    \begin{align*}
        a&\mapsto\begin{pmatrix}2&0\\0&1\end{pmatrix}, & c\mapsto\begin{pmatrix}1&0\\0&2\end{pmatrix}
    \end{align*}
     \item Let \(\varphi_2:\fg(\Sigma_2)\times\fg(\Sigma_1) \to \Q^{2\times2}\) be defined by 
 \begin{align*}
     a&\mapsto\begin{pmatrix}1&2\\0&1\end{pmatrix}, & b&\mapsto\begin{pmatrix}1&0\\2&1\end{pmatrix}, & c\mapsto\begin{pmatrix}2&0\\0&2\end{pmatrix}.
 \end{align*}
 \item Let \(\varphi_3:\sg(\Sigma_2)\times\fg(\Sigma_1) \to \Z^{3\times3}\) be defined by 
 \begin{align*}
     a&\mapsto\begin{pmatrix}2&0&0\\0&1&0\\0&0&1\end{pmatrix}, & b&\mapsto\begin{pmatrix}2&1&0\\0&1&0\\0&0&1\end{pmatrix}, & c\mapsto\begin{pmatrix}1&0&0\\0&1&2\\0&0&1\end{pmatrix}.
 \end{align*}
 \item Let \(\varphi_4:\fg(\Sigma_2)\times\sg(\Sigma_1) \to \Z^{2\times2}\) be defined by 
 \begin{align*}
     a&\mapsto\begin{pmatrix}1&2\\0&1\end{pmatrix}, & b&\mapsto\begin{pmatrix}1&0\\2&1\end{pmatrix}, & c\mapsto\begin{pmatrix}2&0\\0&2\end{pmatrix}.
 \end{align*}
 \end{itemize}
Consider first the mapping \(\varphi_1\).
The images are diagonal matrices that have two 1-by-1 blocks for counting the number of letters seen.

It is straightforward to see that mappings \(\varphi_2,\varphi_3\) and \(\varphi_4\) are injective morphisms.
Indeed, they all use well-known embeddings from the literature for the first component.
In the first and third mappings, the determinant is additionally used to count the number of unary letters.
Because of this, in the first mapping, the domain is \(\Q\) as the image of \(c^{-1}\) does not exist over integers.
While in the second mapping, the block \(\begin{psmallmatrix}
    1&2\\0&1
\end{psmallmatrix}\) is used to encode words over the unary alphabet allowing the domain to be \(\Z\). \qed
\end{proof}

\embeddingnonexistence*
\begin{proof}
We prove the final three cases.
Let us first consider the fourth and fifth cases.
The proofs follow the proof of non-existence of an embedding from \(\sg(\Sigma_2)\times\sg(\Sigma_2)\) into \(\SL(3,\Z)\) found in \cite{KNP18}.
Intuitively, the only difference in the fourth case and the result of \cite{KNP18} is that we extend the codomain with elements in \(\Z\) to the codomain with elements in \(\Oc_d\).
This also applies to the fifth case.
Additionally, we move the explicit condition on determinants of matrices to implicit condition on the domain.
Indeed, while we do not restrict ourselves to images with determinant one, the same constraint applies to the first component since letters have inverses and for the image matrix to have inverse, the determinant has to be a unit.
After this shift of a constraint, the proof of \cite{KNP18} is easily adapted to prove our result.

Let us consider the final case, that is, the non-existence of an embedding from \(\fg(\Sigma_2)\times \sg(\Sigma_2)\) into \(\Oc_d^{3\times 3}\).
Assume to the contrary that such an embedding exists.
Let \(\Sigma_2=\{a,b\}\) and \(\Sigma_1=\{c\}\).
Then \(\varphi:\fg(\Sigma_2)\times\sg(\Sigma_1) \to \SL(2,\Oc_d)\) maps the generators as follows:
\begin{equation*}
    (a,\varepsilon)\mapsto \begin{pmatrix}
        a_1&a_2\\a_3&a_4
    \end{pmatrix}, \hspace{0.2em} 
    (b,\varepsilon)\mapsto \begin{pmatrix}
        b_1&b_2\\b_3&b_4
    \end{pmatrix},  \hspace{0.2em}
    (\varepsilon,c)\mapsto \begin{pmatrix}
        c_1&c_2\\c_3&c_4
    \end{pmatrix}.
\end{equation*}
Without loss of generality we can assume that \(\begin{psmallmatrix}
        a_1&a_2\\a_3&a_4
    \end{psmallmatrix}\) is in the Jordan normal form as conjugating by invertible matrices does not affect the injectivity of a mapping.
Note that \(\Oc_d\) is not algebraically closed, thus the resulting matrices have elements over \(\C\) rather than \(\Oc_d\).
There are three potential Jordan normal forms for 2-by-2 matrices.
Namely,
\begin{equation*}
    (a,\varepsilon)\mapsto \begin{pmatrix}
        \lambda&0\\0&\lambda
    \end{pmatrix}, \begin{pmatrix}
        \lambda&1\\0&\lambda
    \end{pmatrix} \text{ or } \begin{pmatrix}
        \lambda_1&0\\0&\lambda_2
    \end{pmatrix},
\end{equation*}
where the first two forms are possible if the matrix has repeated eigenvalues and the final form is when there are two distinct eigenvalues.
We will rule out each form, thus proving that the embedding does not exist.
Recall that we have the following relations:
\begin{equation}
\begin{aligned}\label{eq:relsappendix}
    \varphi((a,\varepsilon)(b,\varepsilon))&\neq\varphi((b,\varepsilon)(a,\varepsilon)) \\
    \varphi((a,\varepsilon)(\varepsilon,c)) &= \varphi((\varepsilon,c)(a,\varepsilon)) \\
    \varphi((b,\varepsilon)(\varepsilon,c)) &= \varphi((\varepsilon,c)(b,\varepsilon)).
\end{aligned}
\end{equation}

Consider the first form, \((a,\varepsilon)\mapsto \begin{psmallmatrix}
        \lambda&0\\0&\lambda
    \end{psmallmatrix}\).
As this matrix is diagonal, it commutes with every matrix and in particular \(\varphi((a,\varepsilon)(b,\varepsilon))=\varphi((b,\varepsilon)(a,\varepsilon))\), which contradicts the first relation in \eqref{eq:relsappendix}.
The second form, \((a,\varepsilon)\mapsto \begin{psmallmatrix}
        \lambda&1\\0&\lambda
    \end{psmallmatrix}\), is also straightforward.
It is easy to see that matrices of this form commute only with matrices of the form \(\begin{psmallmatrix}
    x&y\\0&x
\end{psmallmatrix}\).
Hence, \(\varphi(\varepsilon,c)\) has to be of this form.
But now, for \(\varphi((b,\varepsilon)(\varepsilon,c))= \varphi((\varepsilon,c)(b,\varepsilon))\) to hold, we observe that \(\varphi(b,\varepsilon)=\begin{psmallmatrix}
    u& v \\ w &u
\end{psmallmatrix}\) with either \(w=0\) or \(y=0\).
In the first subcase, \((b,\varepsilon)\mapsto \begin{psmallmatrix}
        u&v\\0&u
    \end{psmallmatrix}\), for some \(u,v\) which is exactly the type of matrices that commute with \(\begin{psmallmatrix}
        \lambda&1\\0&\lambda
    \end{psmallmatrix}\).
    This contradicts the first relation of \eqref{eq:relsappendix}.
In the second subcase, \((\varepsilon,c)\mapsto \begin{psmallmatrix}
        x&0\\0&x
    \end{psmallmatrix}\), and since the conjugation by an invertible matrix does not affect the determinant, \(x=\pm1\) holds.
This implies that \(\varphi(\varepsilon,c)^3=\varphi(\varepsilon,c)\) contradicting our assumption that \(\varphi\) was an embedding.

Let us consider the last Jordan normal form.
Assume that \((a,\varepsilon)\mapsto \begin{psmallmatrix}
        \lambda_1&0\\0&\lambda_2
    \end{psmallmatrix}\).
Consider the products \(\varphi((a,\varepsilon)(\varepsilon,c))\) and \(\varphi((\varepsilon,c)(a,\varepsilon))\):
\begin{align*}
    \varphi((a,\varepsilon)(\varepsilon,c))&= \begin{pmatrix}
        \lambda_1c_1&\lambda_1c_2\\ \lambda_2c_3&\lambda_2c_4
    \end{pmatrix}\\
    \varphi((\varepsilon,c)(a,\varepsilon))&= \begin{pmatrix}
        \lambda_1c_1&\lambda_2c_2\\ \lambda_1c_3&\lambda_2c_4
    \end{pmatrix}.
\end{align*}
For these two matrices to be equal, we observe that \(c_2=c_3=0\).
Consider then the products \(\varphi((b,\varepsilon)(\varepsilon,c))\) and \(\varphi((\varepsilon,c)(b,\varepsilon))\):
\begin{align*}
    \varphi((b,\varepsilon)(\varepsilon,c))&= \begin{pmatrix}
        b_1c_1&b_2c_4\\ b_3c_1&b_4c_4
    \end{pmatrix}\\
    \varphi((\varepsilon,c)(b,\varepsilon))&= \begin{pmatrix}
        b_1c_1&b_2c_1\\ b_3c_4&b_4c_4
    \end{pmatrix}.
\end{align*}

Assume first that \(c_1\neq c_4\).
We require that the equality \(b_2=b_3=0\) holds.
However, we now violate the first relation of \eqref{eq:relsappendix} as \(\varphi((a,\varepsilon)(b,\varepsilon))=\begin{psmallmatrix}
    \lambda_1b_1&0\\0&\lambda_2b_4
\end{psmallmatrix}=\varphi((b,\varepsilon)(a,\varepsilon))\).
We conclude that \(c_1=c_4\) and \((\varepsilon,c)\mapsto \begin{psmallmatrix}
        c_1&0\\0&c_1
    \end{psmallmatrix}\).
Recall that this matrix is not necessarily in \(\SL(2,\Oc_d)\) as it was conjugated by some invertible matrix in order to turn \(\varphi(a,\varepsilon)\) into the Jordan normal form.
However, as before, the determinant remains 1 and thus \(c_1=\pm1\).
In either case, \(\varphi(\varepsilon,c)^3=\varphi(\varepsilon,c)\) which contradicts the assumption that \(\varphi\) was an embedding.
\qed
\end{proof}

\end{document}